\newtheorem{mydef}{Definition}
\newtheorem{proposition}{Proposition}
\newtheorem{Lemma}{Lemma}
\newtheorem{thm}{Theorem}
\begin{document}{}

\title{Whittle Index Policy for Multichannel Scheduling in Queueing Systems}

%\begin{titlepage}

\author[1]{\small Saad Kriouile}\author[2]{\small Maialen Larranaga}\author[1]{\small Mohamad Assaad}
\affil[1]{TCL Chair on 5G, Laboratoire des Signaux et Syst\`emes CentraleSup\'elec,  91192 Gif sur Yvette, France}\affil[2]{ASML, P.O. Box 324, 5500 AH Veldhoven, The Netherlands}

\maketitle
\newcommand{\HRule}{\rule{\linewidth}{0.5mm}} % Defines a new command for the horizontal lines, change thickness here

%\center % Center everything on the page
 
%----------------------------------------------------------------------------------------
%	HEADING SECTIONS
%----------------------------------------------------------------------------------------

%\textsc{\LARGE CentraleSup\'elec}\\[1.5cm] % Name of your university/college
%\textsc{\Large Delay Optimisation in 5G}\\[0.5cm] % Major heading such as course name
%\textsc{\large D2D system}\\[0.5cm] % Minor heading such as course title

%----------------------------------------------------------------------------------------
%	TITLE SECTION
%----------------------------------------------------------------------------------------
 
%----------------------------------------------------------------------------------------
%	AUTHOR SECTION
%----------------------------------------------------------------------------------------

%\begin{minipage}{0.4\textwidth}
%\begin{flushleft} \large
%\emph{Author:}\\
%Saad \textsc{Kriouile} % Your name
%\end{flushleft}
%\end{minipage}
~
%\begin{minipage}{0.4\textwidth}
%\begin{flushright} \large
%\emph{Supervisors:} \\
%Dr. Mohamad \textsc{ASSAAD}\\
%Dr. Maialen \textsc{LARRANAGA} % Supervisor's Name
%\end{flushright}
%\end{minipage}\\[0.5cm]

% If you don't want a supervisor, uncomment the two lines below and remove the section above
%\Large \emph{Author:}\\
%John \textsc{Smith}\\[3cm] % Your name

%----------------------------------------------------------------------------------------
%	DATE SECTION
%----------------------------------------------------------------------------------------

%{\large \today}\\[0.5cm] % Date, change the \today to a set date if you want to be precise

%----------------------------------------------------------------------------------------
%	LOGO SECTION
%----------------------------------------------------------------------------------------

%----------------------------------------------------------------------------------------

%\end{titlepage}

\begin{abstract}

%In this paper, we consider a queueing system with multiple channels (or servers) and multiple classes of users. We aim at allocating the available channels among the users in such a way to minimize the expected total average queue length of the system. This known scheduling problem falls in the framework of Restless Bandit Problems (RBP) for which an optimal solution is known to be out of reach. The contributions of this paper are as follows. Based on the Lagrangian relaxation method, we decompose the original problem into simpler one-dimensional subproblems and show that threshold policies are optimal for each of them. This allows us to characterize the Whittle index values for the one-dimensional systems and to develop an index-based heuristic for the original scheduling problem. The main difficulty lies in the fact that, for some queue states, deriving the Whittle's index requires introducing a new approach which consists in minimizing the expected discounted delay cost. For that we study the discounted delay cost minimization problem with discount parameter $\beta$, and we deduce the Whittle's indices for the original problem (i.e. with average delay cost minimization) by taking the limit $\beta \rightarrow 1$. The numerical results provided in this paper show that that this policy performs well especially for high number of users.

In this paper, we consider a queueing system with multiple channels (or servers) and multiple classes of users. We aim at allocating the available channels among the users in such a way to minimize the expected total average queue length of the system. This known scheduling problem falls in the framework of Restless Bandit Problems (RBP) for which an optimal solution is known to be out of reach for the general case. The contributions of this paper are as follows. We rely on the Lagrangian relaxation method to characterize the Whittle index values and to develop an index-based heuristic for the original scheduling problem. The main difficulty lies in the fact that, for some queue states, deriving the Whittle's index requires introducing a new approach which consists in introducing a new expected discounted cost function and deriving the Whittle's index values with respect to the discount parameter $\beta$. We then deduce the Whittle's indices for the original problem (i.e. with total average queue length minimization) by taking the limit 
$\beta \rightarrow 1$. The numerical results provided in this paper show that  this policy performs very well and is very close to the optimal solution for high number of users.

\end{abstract}

\section{Introduction}

%In 5G networks, due to the emergence of IOT technology, the number of connected machines has increased. Hence, the challenge has became to handle the signaling congestion as well as the control of the overload due to the massive number of irregular access, Ruan et al. \cite{ruan2017delay}. Moreover, the applications nowadays become more and more delay-sensitive. Then, the main issue in such networks is how to schedule big number of users or machines in order to minimize the total delay.
The problem of scheduling and resource allocation have been widely recognized as a way to improve the network performance and meet the service requirements in networks. Many resource allocation problems have been studied in the past in wired and wireless networks. In this paper, we are interested in the problem of scheduling in queueing systems where a set of users or queues share a set of servers. At each time slot, the servers are allocated to the users in such a way to minimize the total expected length of the users' queues.  Although this problem is well known in the literature, one can show that it is a Restless Bandit Problem (RBP), which is very hard to solve as we will see in the sequel. In fact, this problem has been well studied in the past from a stability perspective  \cite{destounis},  \cite{tassiulas},  \cite{georgiadis}. It has been shown that max weight policy is throughput optimal and many variants have been proposed to deal with different settings and conditions  \cite{tassiulas},  \cite{georgiadis}. The main weakness of max weight policy is that it may result in a high (but finite) average delay. In order to improve the average delay in the system, we are interested in minimizing the total average length of the queues. This problem is a hard problem and can be cast as a Restless Bandit Problem (RBP), a particular model of Markov Decision Processes (MDP).  RBPs are PSPACE-Hard see Papadimitriou et al. \cite{papadimitriou}, which means that their optimal solution is out of reach. One has therefore to develop a sub-optimal but well performing policy.   In this paper, We propose a Whittle index policy to deal with the aforementioned problem. The development of such policy is not straightforward and requires some analysis to prove that such policy exists and to make the corresponding derivation. First, we introduce a new discount factor in the reward function (denoted by $\beta$), analyze the Lagrangian relaxation of the  resulting discounted reward problem, deriving the Whittle's index for this new relaxed problem as a function of $\beta$ and then taking the limit when $\beta \rightarrow 1$ to find the Whittle's index of the original problem. The interest of finding the Whittle's index expressions for our problem is that we can use the known Whittle's index policy (WIP) to allocate the resources to users. In this paper, we will show that for our model an explicit expression of Whittle's index can be found. WIP has been proposed as a suboptimal policy for many problems in the literature, see for instance \cite{liu,ansell}. It has also been shown to perform near optimally in many scenarios and in the particular case of multiclass M/M/1 queues, WIP which simplifies to the $c\mu$-rule is optimal, see  \cite{buyukkoc} and  \cite{larranaga15}. Therefore, WIP (when it is possible to obtain it) is a well performing policy. This motivates the development of the Whittle's index expressions  in this paper. 
%In this paper, the explicit characterization of Whittle's indices is obtained by (i) establishing that threshold policies are optimal for the one-dimensional problems for average and discounted cost's cases (ii) obtaining the explicit expressions of Whittle's indices for states less than transmission rate for the original problem (iii) obtaining Whittle indice's expression for discounted cost's case for all states. Then concluding the expressions of Whittle's indices for the average cost minimization problem for all states by modifying the Whittle indices for states greater than the transmission rate exploiting the fact that the Whittle index policy doesn't change if the order of Whittle indice's is conserved. 
%Finally, we numerically evaluate the performance of WIP in systems with large number of users.

\subsection{Related Work}
There are lot of works which study the problem of resource allocation in wireless networks. For instance, in \cite{deghel}  \cite{destounis} \cite{tassiulas}  \cite{georgiadis}, the authors give a throughput optimal policy for single channel, multi-channel and multi-user MIMO contexts using max weight rule, which is known to not be delay optimal. To overcome this matter, many works have been developed in the past to minimize the average delay of the traffic of the users (e.g. see \cite{cui12} and the references therein). Most of them describe the minimization problem as Markov Decision Process (MDP) and develop resource allocation policies using Bellman equation such as Value iteration algorithm. However, as we have already mentioned in abstract, MDP frameworks and Bellman equation are hard to solve them.   \cite{wang} \cite{cui10}   try to minimize the average delay of the users' queues using stochastic learning algorithm. Indeed, the stochastic learning algorithm consumes lot of time and users memories. Besides, it requires high  computational complexity. 

On the other hand,  for some  MDP problems, the optimal policy turns out to be reachable and has a form of index policy. For instance, in multi-class single-server queue with linear holding costs, the optimal index policy is the $c\mu$ rule that schedules the user with the highest $c\mu$, see \cite{buyukkoc}. Another classical result that can be seen as an index policy is the optimality of Shortest-Remaining-Processing-Time (SRPT), where the index of each customer is given by its remaining service time \cite{schrage}. Both examples fit the general context of Multi-Armed Bandit Problems (MABP), which is a particular case of MDP: at each decision time, we select only one bandit and its state evolves stochastically while the other bandits states stay unchanged. The aim of scheduler is to maximize the total average reward or to minimize the total average cost. %Gittins proposes an optimal solution for MABP, this solution referred to as gittin index policy and consists in choosing the bandit with highest gittin index. 
Whittle introduces the called Restless Bandit (RBP) where the scheduler selects a fixed number of bandits, and all bandits (either active or not)  might evolve. He defines the Whittle index and the Whittle index policy, and prove that it is asymptotically optimal under some conditions.
However, in order to calculate Whittle’s index, there are two main difficulties: first, we need to
establish indexability, and second, the calculation of the Whittle
index itself might be infeasible in some  cases.
Whittle index policy has been derived for birth-and-death multi-class multi servers queue  in \cite{larranaga16}. In \cite{van}, an optimal index policy called Generalized$c\mu$-rule
(Gcµ) is developed in the context of heavy-traffic regime with convex delay cost.  Furthermore, in contrast to $c\mu$ rule policy, \cite{mandelbaum} establishes the optimality of Generalized$c\mu$-rule
(Gcµ) even with multiple servers. In \cite{ansell} the authors calculate Whittle’s index policy for a multi-class queue with general holding cost functions. 

The aforementioned cited works consider a time continuous model. \cite{kriouile} considers different MDP model with discrete time slotted system, and with finite buffer length. In this paper, we consider the same model, but the buffer length is infinite. The major difference between our work and \cite{kriouile} will be in the whittle index derivation. In fact, we will use a new discounted cost approach in order to derive whittle index. %To the best of our knowledge, the closest prior work to our model is \cite{1807.00352} where a scheduling problem in discrete time queueing system is considered. However, \cite{1807.00352} considers a buffer limited capacity model where the maximum size of the queue length is upper bounded by a fixed value. The technique used in \cite{1807.00352} to derive the Whittle's index cannot be extended to our context and one has to recourse to a new approach to find the Whittle's index expressions.
We will explain the difference with respect to \cite{kriouile} in more details in Section~\ref{sec:WI}. In this paper, we  provide an explicit characterization of Whittle's indices by introducing a discounted cost approach, and develop a Whittle index allocation policy for our original problem (average cost case) by adapting the Whittle's indices expressions when discounted parameter is near to one. We find that optimal solution can be seen as $c\mu$ rule for large queue state where $c$ is replaced by the weighted factor $a$ and $\mu$ is replaced by the  transmission rate $R$ which represents the number of packets that can be transmitted per time slot.

% is the most close model considered in this paper. studied the optimization problem in queues considering discrete time. In this paper, we consider first that the time is slotted, and we aim at minimizing the long term average queues of the users. We provide an explicit characterization of Whittle's indices using discounted cost approach, and develop a Whittle index allocation policy for our original problem (average cost case) by adapting the Whittle's indices expressions when discounted parameter is near to one. We find that optimal solution can be seen as $c\mu$ rule for large queue state where $c$ is replaced by the weighted factor $a$ and $\mu$ is replaced by the the transmission rate $R$ which represents the number of packets that can be transmitted per time slot.

The remainder of the paper is organized as follows. In Section~\ref{sec:systemmodel}, we describe the system model and formulate the average cost minimization problem. In Section~\ref{sec:relProb&ThrePol}, we introduce the Lagrangian relaxation and show the optimality of threshold/monotone policies for the relaxed dual problem. In Section~\ref{sec:WI}, we characterize Whittle's indices explicitly for all queue states and explain Whittle's index policy. Numerical results are provided in Section~\ref{sec:numerics} and Section VI concludes the paper. The proofs are provided in the appendix.

\section{System Model}\label{sec:systemmodel}
\subsection{System model description}\label{sec:modelDes}
We consider a time-slotted system with one central scheduler, $N$ queues and $M$ uncorrelated "channels" or "servers" ($N > M$). The words channels and servers will be used interchangeably. We consider a discrete slotted time system, where at each time slot, the scheduler chooses $M$ users among $N$ and allocates to each one exactly one channel.
Let class-$k$ denotes the class of users for which user can transmit at most $R_k$ packets per time slot, called maximum transmission rate, if a channel/server is assigned to the user. In other words, the server rate is not fixed for all queues and depends on the class of users. We consider that the number of different classes is $K$.  and that for all $k$ $R_k \geq 2$, (we will give later brief justification of this assumption).  
Let $\gamma_k$  denotes the proportion of users in class $k$ with respect to the total number $N$ of users. We will use the terms users and queues interchangeably in this paper. We further denote by $A_{i}^k(t)\in\{0,\ldots,R_{k}-1\}$ the number of packets that arrive to class-$k$ queue $i$ at time slot $t$. From above, it is clear that for all $k$ $R_{k}-1 \geq 1$ or equivalently $R_k \geq 2$. Moreover, we assume that the packets arrival follows a uniform distribution. Therefore the probability that $k$ packets arrive at time slot $t$ is $\mathds{1}_{\{k \in \{0,\ldots,R_{k}-1\}\}} \rho_k$, with $\rho_k=1/R_k$.     
We denote by $s^{k}_{i}(t)$ the transmission decision as follows: $s^{k}_{i}(t)=1$ when the user $i$ in class $k$ is scheduled, and $s^{k}_{i}(t)=0$ otherwise.
Let $q^{k}_{i}(t)$ denotes the number of packets in queue $i$ belonging to class-$k$. We consider that all users have infinite queue length, \textit{which is the main difference with respect to the work in \cite{kriouile} in which a finite queue length is assumed}. Then we have: 
\begin{equation}\label{eq: qeue_evolution}
q^{k}_{i}(t+1)=(q^{k}_{i}(t)-R_ks^{k}_{i}(t))^{+}+A^k_i(t).
\end{equation}

\subsection{Problem formulation}\label{sec:probForm} 
We denote  by $\Phi$ the broad class of scheduling policies that make a scheduling decision based on the history of observed queue states and scheduling actions. Therefore,  the scheduling 
problem consists on finding a policy in $\Phi$ that minimizes
the infinite horizon expected average queue length,
subject to the constraint on the number of users selected
in each time slot, i.e. the number of scheduled users must not exceed the number of available channels. According to little law, minimizing the average queue length will reduce the average delay experienced by the users.  %Without loss of generality, we will use the terms average delay and average queues interchangeably in this paper.
Denoting $\frac{M}{N}$ by $\alpha$, the weighted factor for each class $k$ by $a_k$, and given the initial state $q(0)=(q^1_1(0),\ldots,q^1_{N\gamma_1}(0),...,q_{1}^K(0),\ldots,q^K_{N\gamma_K}(0))$, then the problem is formulated as:
\begin{equation}\label{eq:constraint}
\begin{aligned}
& \underset{\phi \in \Phi}{\text{min}}&& \limsup\limits_{T\rightarrow\infty}\frac{1}{T} E[ \sum_{t=0}^{T-1}  \sum_{k=1}^{K} \sum_{i=1}^{{\gamma_k}N} a_kq_i^k(t) \mid q(0), \phi], \\
& \text{s.t.} && \sum_{k=1}^{K} \sum_{i=1}^{{\gamma_k}N} s_i^k(t)\leq {\alpha}N, \forall t.\\
\end{aligned}
\end{equation}

\section{Relaxed Problem and Threshold Policy}\label{sec:relProb&ThrePol}

The problem described in Section~\ref{sec:probForm} is a Restless bandit problem (RBP)  since it consists in scheduling at each time $M$ users (or resources) among  the N users and that at each time the state of each  queue evolves even if the queue is not scheduled. See Whittle \cite{whittle}. Since RBPs are PSPACE-Hard. See Papadimitriou et al. \cite{papadimitriou}, therefore we need to develop an new approximation in order to derive well performing policies.
For that, we will first analyze a relaxed version of the original problem and then use the structure of its optimal policy to develop a Whittle index policy for the original problem. The relaxation considered here is the Lagrangian relaxation approach. This latter consists of relaxing the constraint of available servers. In other words, we consider that the constraint in Equation~\eqref{eq:constraint}, has to be satisfied on average and not in every time slot. That means:
\begin{equation}
\limsup\limits_{T\rightarrow\infty}\frac{1}{T} E[\sum_{k=1}^{K} \sum_{i=1}^{{\gamma_k}N} s_i^k(t)] \leq {\alpha}N.
\end{equation}
Denoting $W$ by the Lagrangian multiplier for the constrained problem, then the Lagrange function equals to:
\begin{align*}
f(W,\phi)=& \limsup\limits_{T\rightarrow\infty}\frac{1}{T} E[ \sum_{t=0}^{T-1}  \sum_{k=1}^{K} \sum_{i=1}^{{\gamma_k}N} (a_kq_i^k(t)+Ws_i^k(t)) \mid \phi,q(0)]\\
&-W{\alpha}N
\end{align*}
Where $W$ can be seen as a subsidy for not transmitting, or the price to decide an active action.
Therefore, the dual problem for a given $W$ is
\begin{equation}
\underset{\phi \in \Phi}{\text{min}} \ f(W,\phi).
\end{equation}
\subsection{Problem Decomposition}
The relaxed problem allows  to decompose the $N$-dimensional problem into much simpler 1-dimensional subproblems. For that, we fix the Lagrangian parameter $W$ and discard from the dual problem formulation the sum which does not depend on $\phi$ (since the problem considered is an optimization problem over a set of policies $\Phi$). Hence, the dual problem will be equivalent to:  
\begin{equation}\label{eq:relaxed}
\underset{\phi \in \Phi}{\text{min}}\limsup\limits_{T\rightarrow\infty}\frac{1}{T} E[ \sum_{t=0}^{T-1}  \sum_{k=1}^{K} \sum_{i=1}^{{\gamma_k}N} (a_kq_i^k(t)+Ws_i^k(t)) \mid \phi,q(0)].
\end{equation}   
%The key idea proposed by Whittle in \cite{Whittle1988restless} is that the relaxation of the constraint above allows to decompose the $N$-dimensional problem in much simpler 1-dimensional subproblems. In this section, we justify this decomposition
%for fixed values of $W$, we get rid of the constants which don't depend on $\phi$, and we consider the problem.
In fact, the solution of this problem is the stationary policy that resolves the well known Bellman equation, e.g. see Ross~\cite{ross}. Namely,
\begin{equation} \label{eq:Bellman}
V(q)+ \theta= \underset{s}{\text{min}} \{ \sum_{k=1}^{K} \sum_{i=1}^{{\gamma_k}N} C_k(q_i^k,s_i^k)+ \sum_{q'} Pr(q'|q,s)V(q')\}
\end{equation}
Where $V(\cdot)$ represents the value function, $\theta$ is the optimal average cost and $C_k(q_i^k,s_i^k)$ is the holding cost $a_kq_i^k+Ws^k_i$ in class-$k$. The optimal decision for each state $q$ can be obtained by minimizing the right hand side of Equation~\eqref{eq:Bellman}.
One can show that, for a given $W$, this relaxed problem can be decomposed into $N$ independent subproblems. We skip the proof here for brevity and refer the reader to \cite{kriouile},as the model therein is similar to our model here except that the queues there have a limited capacity.

\subsection{Threshold policy}\label{sec:threshold policy}
In this section, we show that the solution for each individual problem (for each user $i$) is a threshold policy.
%From now we consider that $R$ is the maximum transmission rate for this particular user.
We give first some useful definitions. 
\begin{mydef} 
For given class-$k$, a threshold policy is a policy $\phi \in \Phi$ for which there exists an $n \geq -1$ such that when the queue of user $i$ is in state $ q_i^k \leq n$, the prescribed action is $s^- \in \{0,1\}$. And when the queue $ q_i^k > n$, the prescribed action is $s^+ \in \{0,1\}$ and $s^- \neq s^+$.\\
Since there are only two possible actions, a policy is of the form threshold policy if and only if it is monotone in $q_i^k$.
\end{mydef}

\begin{mydef}
We say that function $f$ is R-convex in $X=[0,+\infty]$, if for any $x$ and $y$ in $X$ such that $x < y$, we have: $$f(y+R)-f(x+R) \geq f(y)-f(x)$$ 
\end{mydef}

\begin{mydef}
Let $g(x,y)$ be a real valued function defined on $X \times S$, with $S=\{0,1\}$, and $X=[0,+\infty]$. We say that $g$ is submodular if $g(x+1,1)-g(x+1,0) \leq g(x,1)-g(x,0)$ for all $x$ on $X$.
\end{mydef}

The solution of Bellman equation ~\eqref{eq:Bellman} $V(\cdot)$ can be obtained by an algorithm called Value iteration. This consists in updating $V_t(\cdot)$ by the following equation
\begin{equation} \label{value_iteration}
V_{t+1}(q_i^k)= \underset{s_i^k}{\text{min}} \{ C(q_i^k,s_i^k)+ \sum_{q^{'k}_i} Pr(q^{'k}_i|q_i^k,s_i^k)V_t(q^{'k}_i)\} - \theta_k
\end{equation}
After many iteration, $V_t(\cdot)$ will converge to the unique fixed point of the equation ~\eqref{eq:Bellman} called $V(\cdot)$.

\begin{mydef}
We define the operator $TO$ such that for each $(q_i^k,s_i^k) \in [0,+\infty]  \times \{0,1\} $ $$(TO(V))(q_i^k,s_i^k)\triangleq C(q_i^k,s_i^k)+ \sum_{q^{'k}_i} Pr(q^{'k}_i|q_i^k,s_i^k)V(q^{'k}_i)-\theta_k$$ 
\end{mydef}   

\begin{proposition}
For each class-$k$ and user $i$, the optimal solution that resolves the Bellman equation ~\eqref{eq:Bellman} is of type increasing threshold: there exists a state $n$ such that for each state $q_i^k \leq n$ the optimal decision is passive action, and for each state $q_i^k > n$ the optimal decision is active action.  
\end{proposition}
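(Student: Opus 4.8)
The plan is to prove the threshold structure in three stages: first reduce the claim to a submodularity property of the operator $TO(V)$, then reduce that submodularity to an R-convexity (plus monotonicity) property of the value function $V$, and finally establish R-convexity and monotonicity of $V$ by induction through value iteration. For the first stage, recall that by the last definition a policy is of threshold type if and only if it is monotone in $q_i^k$, and that the optimal action at state $q$ minimizes $(TO(V))(q,s)$ over $s\in\{0,1\}$. If $g(q,s) := (TO(V))(q,s)$ is submodular, then $\Delta(q) := g(q,1)-g(q,0)$ is non-increasing in $q$; since the active action is optimal exactly when $\Delta(q)\le 0$, the set of states where transmitting is optimal has the form $\{q > n\}$ for some $n$, which is precisely the increasing-threshold structure claimed. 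So it suffices to prove that $TO(V)$ is submodular.

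For the second stage, I would exploit that the holding cost satisfies $C(q,1)-C(q,0)=W$, independent of $q$, so the cost contribution cancels in the submodularity inequality and only the expected-future-cost terms remain. Writing the two post-decision transitions — passive sends $q$ to $q+A$ and active sends $q$ to $(q-R)^{+}+A$, with $A$ uniform on $\{0,\dots,R-1\}$ — the quantity to control is $F(q)=\tfrac{1}{R}\sum_{a=0}^{R-1}\big[V((q-R)^{+}+a)-V(q+a)\big]$, and submodularity is equivalent to $F$ being non-increasing. For $q\ge R$ the boundary operator is inactive and $F(q)=-\tfrac{1}{R}\sum_{a=0}^{R-1}\big[V(q+a)-V(q-R+a)\big]$ is minus a running average of the increments $V(z+R)-V(z)$; these increments are non-decreasing precisely when $V$ is R-convex, so $F$ is non-increasing on this range. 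For $0\le q<R$ the term $(q-R)^{+}=0$ appears, and there I would instead invoke that $V$ is non-decreasing to sign the relevant differences. Hence R-convexity together with monotonicity of $V$ gives submodularity of $TO(V)$.

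The third stage is where the real work lies. I would prove by induction on the value-iteration iterate $V_t$ that $V_t$ is non-decreasing and R-convex, starting from $V_0\equiv 0$ which trivially satisfies both. Monotonicity is easy to propagate: both branches $C(q,0)+\tfrac{1}{R}\sum_a V_t(q+a)$ and $C(q,1)+\tfrac{1}{R}\sum_a V_t((q-R)^{+}+a)$ are non-decreasing in $q$, and a pointwise minimum of non-decreasing functions is non-decreasing. The main obstacle is propagating R-convexity through the minimization in the value-iteration update, since a pointwise minimum of two R-convex functions need not be R-convex. The remedy is the coupled induction: the submodularity established in stage two (applied to $V_t$) guarantees that the minimizing action switches from passive to active exactly once, at some threshold $n_t$, so that $V_{t+1}$ follows the passive branch for $q\le n_t$ and the active branch for $q>n_t$. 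Away from this crossing, $V_{t+1}(q)$ and $V_{t+1}(q+R)$ lie on the same branch and R-convexity is inherited termwise; the delicate case is the crossing band $n_t-R<q\le n_t$, where $V_{t+1}(q)$ is evaluated on the passive branch while $V_{t+1}(q+R)$ is on the active branch, and there one must verify the R-convexity inequality directly, using the single-crossing structure and the behavior of $(q-R)^{+}$ at the boundary. Once this step is closed, R-convexity and monotonicity pass to the limit $V=\lim_t V_t$, and stage two then yields submodularity of $TO(V)$, completing the proof.
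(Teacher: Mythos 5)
Your proposal follows exactly the same route as the paper's own proof outline, in the same three steps (read in reverse order of your presentation): establish by induction through value iteration that each $V_t$, and hence $V$, is increasing and R-convex; deduce from these two properties that $TO(V)$ is submodular; and conclude the increasing-threshold structure from submodularity of $TO(V)$. Your write-up is in fact more detailed than the paper's, which only sketches these three steps and defers the induction --- including the delicate crossing-band case of R-convexity propagation that you correctly identify as the real work --- to the companion paper \cite{kriouile}.
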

In order to prove this result we need to prove that $TO(V)$ is submodular.\\ 
Proof outline: Since our model is similar to the one considered in \cite{kriouile}, the proof is similar. We provide here a high level description of the proof:\\
1) One has to establish that for all $t$, $V_t(.)$ is increasing and R-convex (this can be done  by induction). From that one can conclude that $V(.)$ is also increasing and R-convex.\\
2) Demonstrate that if $V(.)$ is increasing and R-convex, then $TO(V)$ is submodular.\\
3) Conclude that the optimal solution is an increasing threshold in queue state, by exploiting the submodularity of the function $TO(V)$

\section{Whittle's index}\label{sec:WI}
In this section, we will introduce the notion of Whittle's index, which will be useful to develop a new heuristic for original problem. We will review the main result and approach obtained in \cite{kriouile} and explain its limitation and why such approach cannot be used to find the Whittle index values if the queues have unlimited capacity. 

At given state $n$, the Whittle's index is the Lagrange multiplier or subsidy for passivity for which the optimal decision at this state is indifferent (passive and active decision are both optimal). This definition requires that the property of indexability is satisfied. 
This property consists in establishing that as the subsidy for passivity, W, increases,
the collection of states in which the optimal action is passive increases.
%We work on given class-k, and we consider its maximum transmission rate is $R$. The next results are valid for all classes.\\ 

Before providing a rigorous definition of indexability, we recall from the previous section that the optimal policy for the relaxed problem for given W is a threshold policy. We denote by $u^n_k$ the stationary distribution of the states under threshold policy $n$ in class-$k$. We now formalize the concepts of indexability and Whittle's index in the following definition.
\begin{mydef}
A class of queues is indexable if the set of states in which the passive action is the optimal action 
(denoted by $D(W)$) increases in $W$. That is, $W' < W \Rightarrow D(W') \subseteq D(W)$.
When the class is indexable, the Whittle's index in state $n$ in class-$k$ is defined as: \[W_k(n)=\min \{W |n \in D(W)\}\]
\end{mydef}
We start showing the indexability of the problem. The proof is straightforward and can be obtained from the previous work in this area. 

\begin{proposition}\label{prop_wi}
Assuming that for each $W$ and class-$k$, the optimal solution is of type threshold $n_k(W)$, and $\sum_0^n u^n_k(q)$ is increasing in $n$, then the class is indexable. 
\end{proposition}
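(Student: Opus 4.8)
The plan is to reduce indexability to a single monotonicity statement about the optimal threshold. Since by the first assumption the optimal policy for every $W$ is of threshold type $n_k(W)$, Proposition~1 tells us that the set of states in which the passive action is optimal is exactly $D(W)=\{0,1,\dots,n_k(W)\}$. Consequently the indexability condition $W'<W\Rightarrow D(W')\subseteq D(W)$ is equivalent to the threshold map $W\mapsto n_k(W)$ being non-decreasing, so it suffices to prove that the optimal threshold grows with the subsidy $W$.

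First I would fix a class $k$ and write, for each candidate threshold $n$, the long-run average cost $g_n(W)$ incurred by committing to threshold policy $n$. The key observation is that a fixed threshold policy prescribes the action as a function of the queue state alone (active iff $q_i^k>n$), so the induced Markov chain and its stationary distribution $u^n_k$ are independent of $W$; the multiplier enters only through the linear term $Ws_i^k$. Writing $U_n\triangleq\sum_{q=0}^n u^n_k(q)$ for the stationary mass of the passive region, this gives the affine expression
$$g_n(W)=\sum_{q} u^n_k(q)\,a_k q + W\,(1-U_n),$$
a line in $W$ with slope $1-U_n$. By the threshold assumption the optimal average cost equals the lower envelope $\min_n g_n(W)$, and $n_k(W)$ is a minimizing index of this envelope.

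Next I would exploit the second hypothesis, namely that $U_n=\sum_{q=0}^n u^n_k(q)$ is increasing in $n$, which makes the slopes $1-U_n$ non-increasing in $n$. For any two thresholds $m<n$ the difference $g_n(W)-g_m(W)=(C_n-C_m)+(U_m-U_n)W$, where $C_n\triangleq\sum_q u^n_k(q)\,a_k q$, has non-positive slope $U_m-U_n\le 0$ and is therefore non-increasing in $W$: once the higher threshold $n$ attains the smaller cost it keeps doing so as $W$ grows. This single-crossing property forces the minimizing index $n_k(W)$ to be non-decreasing in $W$. Combining this with $D(W)=\{0,\dots,n_k(W)\}$ yields $D(W')\subseteq D(W)$ whenever $W'<W$, which is precisely indexability, and $W_k(n)=\min\{W\mid n\in D(W)\}$ is then well defined.

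I expect the delicate point to be the monotone-selection step. Because $U_n$ is only assumed increasing (not strictly), consecutive lines may be parallel and several thresholds may be simultaneously optimal at a given $W$; one must therefore fix a consistent tie-breaking rule (for instance, letting $n_k(W)$ be the largest optimal threshold) and verify that the single-crossing property then yields a genuinely monotone map $W\mapsto n_k(W)$ rather than merely a nested family of optimal sets. The other ingredient worth stating carefully is the $W$-independence of $u^n_k$, since it is exactly this linearity that turns the comparison of competing policies into a comparison of straight lines and makes the whole argument elementary.
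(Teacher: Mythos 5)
Your proof is correct. Note that the paper itself never writes out a proof of this proposition: it only remarks that the proof ``is straightforward and can be obtained from the previous work in this area'' and defers to \cite{kriouile}. The argument you give — a fixed threshold policy has $W$-independent stationary distribution, so its average cost is affine in $W$ with slope $1-U_n$; the hypothesis that $U_n=\sum_{q=0}^n u^n_k(q)$ increases in $n$ makes these slopes non-increasing, giving the single-crossing property of the lines; hence the (largest) minimizer of the lower envelope is non-decreasing in $W$, which is exactly $D(W')\subseteq D(W)$ for $W'<W$ — is precisely the standard argument used in that referenced work, and your tie-breaking via the largest optimal threshold is the right fix for the non-strict monotonicity of $U_n$. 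The only further technicality worth flagging is that the set of optimal thresholds may be unbounded (e.g.\ at the value of $W$ where all states $n\geq R_k$ become indifferent, every threshold $n\geq R_k$ is optimal), so ``largest minimizer'' should be read as a supremum in $\mathbb{N}\cup\{\infty\}$; your single-crossing step goes through verbatim with this convention.
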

%\begin{proposition}
%For any $k$, the queue class-$k$ is indexable.
%\end{proposition}
Where $n_k(W)$ is an optimal threshold at W(i.e. optimal solution of the relaxed problem for given W) in class-$k$ and $u^n_k$ the stationary distribution of the queue states under threshold policy $n$ in class-$k$. One can show that the condition in the aforementioned proposition is satisfied for our problem. The proof is similar to the one in \cite{kriouile} and is skipped here for brevity.\\ 
Several works have been conducted in the past to find Whittle index values for different scheduling problems, e.g. \cite{larranaga15} and the references therein. In \cite{larranaga15}, an algorithm is provided to compute Whittle's index for a queueing system with one server. This algorithm in fact, gives recursively expression of whittle index for given state. However, the complexity of the algorithm grows with number of states and hence, cannot be practically applied to our context. More generally, this algorithm cannot be theoretically applied in some cases where the passive decision's average time takes different values for an infinite set of states, since the number of iterations of the algorithm will be infinite. Since we consider that the queue state is not bounded, the above algorithm cannot be used. In \cite{kriouile}, a closed form expression of Whittle's index is given, which simplifies the complexity of the computation.  
Let us now restate the Whittle's index result in \cite{kriouile}. 

\begin{proposition}\label{prop:w.i} \cite{kriouile}
The Whittle's indices expressions are defined for states $\in$ $[0,R_k-1]$ and are given as,\\
for $0 \leq n \leq R_k-1$: $W_k(n)=w_n^k=x_{n,n-1}^k=\frac{a_kR_kn}{R_k-n}$
\end{proposition}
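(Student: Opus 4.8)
The plan is to compute $W_k(n)$ directly from its defining indifference property, using the average-cost Bellman equation of the threshold-$n$ policy together with the structural fact that a single transmission empties any queue whose state is $\le R_k-1$. Recall that the optimal policy for a fixed subsidy $W$ is an increasing threshold, and that by indexability (Proposition~\ref{prop_wi}) the value $W_k(n)$ is the unique subsidy at which the passive and the active action are simultaneously optimal at state $n$. Writing $V$ for the relative value function and $\theta_k$ for the average cost of this policy, the indifference at state $n$ reads
\[
\frac{1}{R_k}\sum_{a=0}^{R_k-1}V(n+a)\;=\;W\;+\;\frac{1}{R_k}\sum_{a=0}^{R_k-1}V\big((n-R_k)^{+}+a\big).
\]
The key observation, and the reason the states $n\le R_k-1$ are tractable, is that $(n-R_k)^{+}=0$ here: transmitting from such a state empties the queue, so the post-action state is exactly the arrival $A\in\{0,\dots,R_k-1\}$. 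Hence the right-hand sum collapses to $\tfrac{1}{R_k}\sum_{a}V(a)$ and the condition becomes $W=\tfrac{1}{R_k}\sum_{a=0}^{R_k-1}\big(V(n+a)-V(a)\big)$, which I identify with $x_{n,n-1}^k$.

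Second, I would eliminate the value function. The passive Bellman equations for $q=0,\dots,n$ give $\sum_{a=0}^{R_k-1}V(q+a)=R_k\big(V(q)+\theta_k-a_kq\big)$; subtracting the $q=0$ instance from the $q=n$ instance reduces the indifference condition to the compact form $W=V(n)-V(0)-a_kn$. It then remains to pin down $V(n)-V(0)$ and $\theta_k$. For this I would add the active Bellman equations at $q=n+1,\dots,R_k$, which (again because these states empty on service) satisfy $V(q)-V(0)=a_kq+W$, together with the remaining passive equations and, where the sums $\sum_a V(q+a)$ reach beyond $R_k$, the active equations at $q=R_k+1,\dots,n+R_k-1$ (which reset to states $\ge 1$ and hence reference only already-appearing lower states). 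Taking $V(0)=0$ by convention, this yields a square, self-contained linear system in $V(1),\dots,V(n+R_k-1),\theta_k,W$; solving it for $W$ produces $W_k(n)=\dfrac{a_kR_kn}{R_k-n}$, the boundary case $n=0$ giving $W_k(0)=0$ as expected.

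The main obstacle is this last step: showing the system is genuinely finite and then carrying out the elimination cleanly for arbitrary $n$ and $R_k$. The delicate point is that the passive equation at $q=n$ involves $V(n+R_k-1)$, which for $n\ge 2$ is an active state strictly above $R_k$ whose service does \emph{not} empty the queue; one must check that the chain of active equations it triggers terminates, i.e. that each such state resets to a strictly lower group of states, so no information from states beyond $n+R_k-1$ is ever required. Once closure is established the remainder is routine algebra — I verified it by hand for $(R_k,n)\in\{(2,1),(3,1),(3,2)\}$, obtaining $2a_k,\ \tfrac32 a_k,\ 6a_k$, all matching $\tfrac{a_kR_kn}{R_k-n}$ — but the general case requires tracking the telescoping sums carefully. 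I would emphasize that this average-cost argument works \emph{only} because $n\le R_k-1$ forces the queue to empty on service; for $n\ge R_k$ the post-service state depends on $n$, the linear system no longer closes in finitely many steps, and this is exactly the gap that the discounted-cost approach of Section~\ref{sec:WI} is introduced to fill.
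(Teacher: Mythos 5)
Your proposal is correct in substance, and it takes a route that differs from both existing derivations of this formula. The paper itself does not prove this proposition: it imports it from \cite{kriouile}, where the index is obtained by computing the stationary distribution under each threshold policy, reformulating the relaxed problem in terms of that distribution, and running a recursive index algorithm. The paper's own machinery recovers the same formula only indirectly, via the discounted-cost index $\frac{\beta a_k R_k n}{R_k-\beta n}$ and the limit $\beta\to 1$. You instead work directly with the average-cost relative value function of the threshold-$n$ policy, which is viable precisely because $n\le R_k-1$ forces the queue to empty on service and keeps the index finite; your closing remark about why this collapses for $n\ge R_k$ (the index there is $\frac{a_kR_k\beta}{1-\beta}\to\infty$) is exactly the paper's stated motivation for introducing $\beta$. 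Structurally, your computation is the $\beta=1$ specialization of the paper's Appendix~A argument, so it buys a discount-free derivation for these states at the cost of not extending to the rest of the state space. Your intermediate identities (the passive-equation subtraction giving $W=V(n)-V(0)-a_kn$, the reset identity $V(q)-V(0)=a_kq+W$ for active $q\le R_k$, and the closure of the system at $V(n+R_k-1)$ because every active state above $R_k$ resets strictly downward) are all correct, and your three hand computations match the formula.

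Two points need tightening before this is a proof rather than a verified plan. First, the general-$(n,R_k)$ elimination you leave open has a closed form that avoids the linear system entirely: for $0\le i\le n-1$, subtract the passive evaluation equation at $i$ from the active one at $i+R_k$ (which empties back to $i$); the continuation terms $\rho_k\sum_{j}V(i+j)$ are identical and cancel, giving the telescoping identity $V(i+R_k)-V(i)=a_kR_k+W$ — the average-cost analogue of Lemma~\ref{lem:velue_function_difference}. Inserting this into your indifference condition, after cancelling the overlapping terms $V(n),\dots,V(R_k-1)$ in the two sums, yields
\begin{equation*}
W=\rho_k\sum_{i=0}^{n-1}\bigl(V(i+R_k)-V(i)\bigr)=\rho_k n\,(a_kR_k+W),
\end{equation*}
hence $W(1-n\rho_k)=a_kn$ and $W=\frac{a_kR_kn}{R_k-n}$ with no case-by-case algebra. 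Second, identifying this indifference point with the Whittle index requires more than the indexability of Proposition~\ref{prop_wi}: you need that indifference of the threshold-$n$ policy's own value function at state $n$ makes that policy optimal (this is the submodularity step, the average-cost analogue of Proposition~\ref{prop:optimality_condition_of_n}), and that for every smaller $W$ the passive action is strictly suboptimal at $n$ (monotonicity of the indifference gap in $W$, which in your system is the positivity of the coefficient $1-n\rho_k$, together with finiteness of the optimal threshold). Both hold here, but they are the actual content of the paper's Propositions~\ref{prop:optimality_condition_of_n}--\ref{prop:finite_threshold_condition} and should be argued, not absorbed into the word ``uniqueness.''
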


Based on the above Whittle's index result, the work in \cite{kriouile} provides a Whittle index policy, which consists on allocating the servers to the $M$ users which have the highest Whittle index at time $t$, denoted by $WI$.\\
%However, \cite{1807.00352} assumes that the buffer length of each queue is bounded by a fixed value $L$, as we will explain in the sequel. In other words, the queue length can take only $L$ values where $L$ is a predefined value. For queue states $q>L$, the technique in \cite{1807.00352} fail to provide the Whittle's index and one has to recourse to another tool to obtain it. In addition, if one takes $L$ very high to cover all (or most of) possible queue state values in practice, the complexity of the method in \cite{1807.00352} grows with $L$, which necessitates the use of another method.
However, the above result is limited to the case where the states are $[0,R_k-1]$. The technique used in \cite{kriouile} consists of finding the stationary distribution of the states under threshold policy, reformulating the relaxed problem using this stationary distribution and analyzing this reformulated problem (which is similar to a deterministic one) to find the explicit expressions of Whittle index based on the algorithm that we have discussed before. In this paper, since the algorithm that gives us the whittle index expressions can not be applied for all states, we rely on another method which allows us to find the Whittle index values for all possible states. In order to work with the original cost function, we formulate a discounted reward problem in which $\beta$ is a discount factor. We analyze this discounted problem and found the Whittle index expressions (that depend on $\beta$) and then by taking $\beta \rightarrow 1$, we obtain the Whittle's index for our original problem.   

\subsection{Problem reformulation using discounted cost approach} \label{sec:disc_cost}
We start by formulating the original problem with the expected discounted cost:
\begin{equation}\label{eq:constraint_discounted}
\begin{aligned}
& \underset{\phi \in \Phi}{\text{min}}&& E \left[ \sum_{t=0}^{+\infty}  \sum_{k=1}^{K} \sum_{i=1}^{{\gamma_k}N} \beta^t a_k q_i^k(t) \mid q(0), \phi\right], \\
& \text{s.t.} && \sum_{k=1}^{K} \sum_{i=1}^{{\gamma_k}N} s_i^k(t)\leq {\alpha}N, \forall t.\\
\end{aligned}
\end{equation}
Following the same steps as in section \ref{sec:probForm}, we relax the problem and give the dual relaxed problem for given $W$:
\begin{equation}\label{eq:relaxed_dual_discounted_cost_individual}
\underset{\phi \in \Phi}{\text{min}} \sum_{t=0}^{+\infty} E[ \sum_{k=1}^{K} \sum_{i=1}^{{\gamma_k}N}\beta^t(a_k q_i^k(t)+W s_i^k(t)) \mid \phi,q(0)].
\end{equation}

Then we decompose it into $N$ individual problems since  the Bellman equation that resolves the dual problem is decomposable.  
The Bellman equation for an individual problem is \cite{ross}.  
\begin{equation} \label{eq:individual_discounted_cost}
V(q_i^k)= \underset{s_i^k}{\text{min}} \{ C(q_i^k,s_i^k)+ \beta \sum_{q^{'k}_i} Pr(q^{'k}_i|q_i^k,s_i^k)V(q^{'k}_i)\}
\end{equation}
In fact $V(q_i^k)$ is no more than the discounted cost when the initial state is $q_i^k$, $V(q_i^k)=\sum_{t=0}^{+\infty} E[ \beta^t(a_kq_i^k(t)+Ws_i^k(t)) \mid \phi,q_i^k(0)=q_i^k]$, and $C(q_i^k,s_i^k)=a_kq_i^k+Ws_i^k$.

Following the same method in section \ref{sec:threshold policy}, we can prove that the optimal solution that satisfies this Bellman equation is a threshold policy, by proving that the function $TO(V)$ is submodular. We can also conclude that the value function has same structural property as in section \ref{sec:threshold policy}, especially that the submodularity and $R_k$-convexity hold true.
However, contrary to what has been done in \cite{kriouile}, finding the steady state distribution will not give an explicit expression of the problem \ref{eq:relaxed_dual_discounted_cost_individual}. Nevertheless, we can work only with the Bellman equation to derive the Whittle index thanks to the parameter $\beta$ which helps us to find the Whittle index for all states. %(since it is strictly smaller than 1).

\begin{mydef} 
We define $C_0^n(q_i^k)$ and $C_1^n(q_i^k)$ in class-$k$ as the discounted costs starting at the initial queue state $q_i^k$ at which the decision taken is to not be scheduled ($s_i^k=0$) or to be scheduled ($s_i^k=1$) respectively and when the policy considered is threshold $n$, explicitly: 
$$C_0^n(q_i^k)\triangleq a_kq_i^k+ \beta \sum_{q^{'k}_i} Pr(q^{'k}_i|q_i^k,0)V^n(q^{'k}_i)$$
$$C_1^n(q_i^k)\triangleq a_kq_i^k+W + \beta \sum_{q^{'k}_i} Pr(q^{'k}_i|q_i^k,1)V^n(q^{'k}_i)$$
Where $V^n(\cdot)$ is the value function under threshold policy $n$.
\end{mydef}

\begin{mydef}\label{function_g} 
We define $g_k(n,W)$ as function defined in $[0,+\infty[ \times \mathbb{R}$, such that for all $(n,W) \in [0,+\infty[ \times \mathbb{R}$, $g_k(n,W)=C_1^n(n)-C_0^n(n)$ 
\end{mydef}

\begin{proposition}\label{prop:expression_of_g(n,w)}
For $0 \leq n \leq R_k-1$: $$g_k(n,W)=W(1-n \beta \rho_k)-a_kn \beta$$
For $n \geq R_k$: $$g_k(n,W)=\frac{W(1-\beta)-a_kR_k \beta}{1- \rho_k \beta}$$
\end{proposition}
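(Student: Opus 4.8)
The plan is to evaluate $g_k(n,W)=C_1^n(n)-C_0^n(n)$ by a first-step analysis against the threshold-$n$ value function $V^n$, never solving for $V^n$ in full. Since arrivals are uniform on $\{0,\dots,R_k-1\}$, the policy-evaluation equations read $V^n(q)=a_kq+\beta\rho_k\sum_{i=q}^{q+R_k-1}V^n(i)$ for passive states $q\le n$ and $V^n(q)=a_kq+W+\beta\rho_k\sum_{i=(q-R_k)^+}^{(q-R_k)^++R_k-1}V^n(i)$ for active states $q>n$. Subtracting the passive from the active continuation at the state $n$ (the immediate holding cost $a_kn$ cancels) gives $g_k(n,W)=W+\beta\rho_k\,(S_{\mathrm{act}}-S_{\mathrm{pas}})$, where $S_{\mathrm{pas}}=\sum_{i=n}^{n+R_k-1}V^n(i)$ and $S_{\mathrm{act}}$ is the sum of $V^n$ over the support of the post-active state. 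This support is $\{0,\dots,R_k-1\}$ when $(n-R_k)^+=0$, i.e.\ $n\le R_k-1$, and $\{n-R_k,\dots,n-1\}$ when $n\ge R_k$; this is precisely what splits the statement into its two regimes.

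For $0\le n\le R_k-1$ I would first use the passive equations at $q=0$ and $q=n$ to write $S_{\mathrm{act}}=V^n(0)/(\beta\rho_k)$ and $S_{\mathrm{pas}}=(V^n(n)-a_kn)/(\beta\rho_k)$, which collapses the expression to the closed form $g_k(n,W)=W+a_kn+V^n(0)-V^n(n)$. It then remains to compute $V^n(n)-V^n(0)=\sum_{q=1}^{n}\bigl(V^n(q)-V^n(q-1)\bigr)$. The crucial device is the telescoping identity $V^n(m+R_k)-V^n(m)=a_kR_k+W$ for $0\le m\le n-1$: the passive equation at $m$ and the active equation at $m+R_k$ share the same continuation sum $\sum_{i=m}^{m+R_k-1}V^n(i)$ (the reset sum when $m=0$, the shifted sum when $m\ge1$), so on subtracting it cancels and only the immediate-cost gap survives. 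Plugging this into the one-step increment $V^n(q)-V^n(q-1)=a_k+\beta\rho_k\bigl(V^n(q-1+R_k)-V^n(q-1)\bigr)$ shows every increment equals $a_k(1+\beta)+\beta\rho_kW$; summing $n$ of them and substituting yields $g_k(n,W)=W(1-n\beta\rho_k)-a_kn\beta$.

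For $n\ge R_k$ the two states $n$ and $n-R_k$ are both passive, so the telescoping is unavailable and I would instead argue by a pathwise coupling of the two continuations under threshold $n$, driven by a common arrival stream. After the forced first action the passive copy sits exactly $R_k$ above the active copy, both inside the passive region (here $n\ge R_k$ guarantees the active copy never underflows at $0$). So long as the arrival is $0$ this gap and the passivity of both copies persist; the first strictly positive arrival turns the upper copy active while the lower stays passive, and one further step makes the two coalesce. Letting $\bar\Psi$ denote the expected discounted cost difference (active continuation minus passive continuation) accumulated from the post-action configuration, this renewal structure gives $\bar\Psi=-a_kR_k-(1-\rho_k)W+\rho_k\beta\bar\Psi$, whence $\bar\Psi=\bigl(-a_kR_k-(1-\rho_k)W\bigr)/(1-\rho_k\beta)$; adding the $t=0$ contribution $W$ gives $g_k(n,W)=W+\beta\bar\Psi=\bigl(W(1-\beta)-a_kR_k\beta\bigr)/(1-\rho_k\beta)$, which is indeed independent of $n$, as the translation invariance of the coupling predicts.

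The real obstacle throughout is that the queue is unbounded, so one cannot simply invert a finite linear system for $V^n$; both arguments are engineered to extract $g_k(n,W)$ from local relations only. I expect the delicate points to be, first, verifying the telescoping identity uniformly in $m$ (in particular at the reset boundary $m+R_k=R_k$), and second, the coupling bookkeeping for $n\ge R_k$ --- confirming that the $R_k$-gap is preserved exactly, that coalescence occurs one step after the first positive arrival, and that the lower copy never hits $0$ prematurely, since any of these failing would reintroduce the boundary and break the clean renewal equation.
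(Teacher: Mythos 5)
Your proof is correct in both regimes, but the two cases compare differently against the paper. For $0\le n\le R_k-1$ you are essentially reproducing the paper's argument: your telescoping identity $V^n(m+R_k)-V^n(m)=a_kR_k+W$ for $0\le m\le n-1$ is exactly the paper's Lemma (stated there as $V^p(R_k+i)-V^p(i)=a_kR_k+W$ for $0\le i\le p\le R_k-1$, proved by the same cancellation of the shared continuation sum), and whether one feeds it into one-step increments $V^n(q)-V^n(q-1)=a_k+\beta\rho_k\bigl(V^n(q-1+R_k)-V^n(q-1)\bigr)$, as you do, or directly into the regrouped difference $W+\beta\rho_k\sum_{i=0}^{n-1}\bigl(V^n(i)-V^n(i+R_k)\bigr)$, as the paper does, is only bookkeeping. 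For $n\ge R_k$, however, your route is genuinely different. The paper stays algebraic: it expands both $C_1^n(n)$ and $C_0^n(n)$ one extra step, observes that each expansion contains itself ($C_1^n(n)$ through the zero-arrival term, $C_0^n(n)$ through $V^n(n)=C_0^n(n)$), solves the two resulting linear equations in $C_1^n(n)$ and $C_0^n(n)$, and subtracts, the double sums $\sum_{i\ge 1}\sum_j V^n(n-R_k+i+j)$ cancelling identically. Your coupling/renewal argument replaces that cancellation by a pathwise statement: under common arrivals the two continuations keep an exact gap of $R_k$ while both remain passive, and coalesce one step after the first positive arrival, yielding $\bar\Psi=-a_kR_k-(1-\rho_k)W+\rho_k\beta\bar\Psi$ and hence the same closed form. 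Both are sound --- your coalescence step uses $(n+j-R_k)^+=n+j-R_k$, which is precisely where the hypothesis $n\ge R_k$ enters, matching the paper's implicit use of it. What your version buys is an explanation of \emph{why} $g_k(n,W)$ is constant in $n$ on this range (translation invariance of the coupled dynamics away from the boundary at $0$), at the price of needing $\bar\Psi$ to be a well-defined finite number so the fixed-point equation can be solved; you should make explicit that under a finite threshold $n$ the reachable state space from any initial state is bounded (passive states move up by at most $R_k-1$, active states strictly decrease), so $V^n$ is finite and this is not an issue.
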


\begin{proof}
See Appendix \ref{app:expression_of_g(n,w)}
\end{proof}

We emphasize that to prove that for fixed W, in class-$k$, a given state $n$ is indeed an optimal threshold (i.e. if $q_i^k \leq n$ the queue is not scheduled and otherwise it is scheduled), we  just need to prove that it satisfies for all states $q_i^k \leq n$ $C_0^n(q_i^k) \leq C_1^n(q_i^k)$ and for $q_i^k > n$ $C_0^n(q_i^k) \geq C_1^n(q_i^k)$ (according to Bellman equation). In other words, we suppose that $n$ is a threshold (i.e. if $q_i^k \leq n$ the queue is not scheduled and otherwise it is scheduled), and we show that for all states $q_i^k \leq n$ $C_0^n(q_i^k) \leq C_1^n(q_i^k)$ and for $q_i^k > n$ $C_0^n(q_i^k) \geq C_1^n(q_i^k)$ (for given value of $W$, the optimal threshold might not be unique).

\begin{proposition}\label{prop:optimality_condition_of_n}
For class-$k$, if there exists $n$ such that $C_0^n(n)=C_1^n(n)$, then $n$ is an optimal threshold.
\end{proposition}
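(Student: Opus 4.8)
The plan is to reduce the statement to the verification criterion recalled just before the proposition: it suffices to show that, for the threshold-$n$ policy, $C_0^n(q) \le C_1^n(q)$ holds for every class-$k$ state $q \le n$ and $C_0^n(q) \ge C_1^n(q)$ holds for every $q > n$, where I write $q$ for a generic queue state $q_i^k$. Introduce the single-state difference $h(q) := C_1^n(q) - C_0^n(q)$. The two required families of inequalities are then exactly $h(q) \ge 0$ for $q \le n$ and $h(q) \le 0$ for $q > n$. By hypothesis $h(n) = C_1^n(n) - C_0^n(n) = g_k(n,W) = 0$ (Definition~\ref{function_g}), so the whole proposition follows as soon as I prove that $h$ is non-increasing in $q$: a non-increasing function that vanishes at $n$ is necessarily $\ge 0$ for $q \le n$ and $\le 0$ for $q > n$.

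Thus the core of the argument is the monotonicity of $h$, which is nothing but submodularity of the one-step operator evaluated at the threshold-policy value function $V^n$. Substituting the definitions of $C_0^n$ and $C_1^n$, the common term $a_k q$ cancels, and using that the passive action sends $q$ to $q+A$ while the active action sends it to $(q-R_k)^+ + A$, with $A$ uniform on $\{0,\dots,R_k-1\}$, I obtain
$$h(q) = W + \beta \rho_k \Big( \sum_{j=0}^{R_k-1} V^n((q - R_k)^+ + j) - \sum_{j=0}^{R_k-1} V^n(q + j) \Big).$$
For $q \ge R_k$ the positive part is inactive and the increment telescopes to
$$h(q+1) - h(q) = \beta \rho_k \big[ (V^n(q) - V^n(q - R_k)) - (V^n(q + R_k) - V^n(q)) \big],$$
which is $\le 0$ precisely by the $R_k$-convexity of $V^n$ (the defining inequality applied to $x = q - R_k < y = q$). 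For $q < R_k$, the first sum in $h$ is the $q$-independent constant $\sum_{j=0}^{R_k-1} V^n(j)$, and the same telescoping leaves $h(q+1) - h(q) = \beta \rho_k (V^n(q) - V^n(q + R_k)) \le 0$ by monotonicity of $V^n$; the junction $q = R_k - 1 \to R_k$ is checked the same way and again reduces to a single increasing difference. Hence $h$ is non-increasing on all of $[0,+\infty[$.

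The two structural inputs I still owe are that $V^n$ is non-decreasing and $R_k$-convex. These are the properties of the value function under the fixed threshold-$n$ policy, and I would establish them exactly as in Section~\ref{sec:threshold policy}: run the policy-evaluation recursion (value iteration with the action frozen to the threshold-$n$ rule) and show by induction on $t$ that every iterate $V_t^n$ is non-decreasing and $R_k$-convex, exploiting the uniform arrival law and the queue dynamics~\eqref{eq: qeue_evolution}; passing to the limit transfers both properties to the fixed point $V^n$. I expect this induction to be the main obstacle, because the operator $(\cdot - R_k)^+$ splits the recursion into the regimes $q < R_k$ and $q \ge R_k$, so monotonicity and $R_k$-convexity must be propagated carefully across that boundary — this is the one place where the infinite-buffer model genuinely differs from the bounded-buffer analysis of \cite{kriouile}. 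Everything downstream (the cancellation producing $h$, the telescoping, and the final sign bookkeeping) is routine, and once $h$ is known to be non-increasing, combining $h(n)=0$ with the verification criterion immediately yields that $n$ is an optimal threshold.
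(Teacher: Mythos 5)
Your proposal is correct and takes essentially the same approach as the paper: the paper's own proof simply invokes submodularity of $TO(V^n)$ — i.e.\ that $C_1^n(q_i^k)-C_0^n(q_i^k)$ is non-increasing in the state — and combines it with $C_1^n(n)-C_0^n(n)=0$ to obtain the required sign pattern on both sides of $n$, which is exactly your argument via the function $h$. The only difference is that you unpack the submodularity step explicitly (telescoping plus monotonicity and $R_k$-convexity of $V^n$), whereas the paper cites it as already established; both you and the paper ultimately defer the monotonicity/$R_k$-convexity of the threshold-policy value function to the inductive argument of Section~\ref{sec:threshold policy}.
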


\begin{proof}
See Appendix \ref{app:optimality_condition_of_n}
\end{proof}

\begin{proposition}\label{prop:threshold_condition}
If $W=\frac{\beta a_k R_k n}{R_k- \beta n}$, then for $n \leq R_k-1$, $n$ is an optimal threshold. And if $W = \frac{a_kR_k \beta}{1-\beta}$, then for all $n \geq R_k$ $n$ is an optimal threshold. 
\end{proposition}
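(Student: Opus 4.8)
The plan is to reduce the claim to the optimality criterion already established in Proposition~\ref{prop:optimality_condition_of_n}, which states that a state $n$ is an optimal threshold whenever $C_0^n(n)=C_1^n(n)$. By Definition~\ref{function_g} this equality is precisely the condition $g_k(n,W)=0$. Hence it suffices to verify that each of the two prescribed values of $W$ is exactly the root (in $W$) of the corresponding branch of $g_k(n,W)$ given in Proposition~\ref{prop:expression_of_g(n,w)}, and then invoke Proposition~\ref{prop:optimality_condition_of_n}.

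For the first regime, $0 \le n \le R_k-1$, I would substitute $W=\frac{\beta a_k R_k n}{R_k-\beta n}$ into $g_k(n,W)=W(1-n\beta\rho_k)-a_k n\beta$. Using $\rho_k=1/R_k$ one has $1-n\beta\rho_k=(R_k-\beta n)/R_k$, so the factor $R_k-\beta n$ cancels against the denominator of $W$ and the first term collapses to $\beta a_k n$, which exactly cancels the second term $a_k n\beta$, giving $g_k(n,W)=0$. Note that the restriction $n\le R_k-1$ together with $\beta\le 1$ guarantees $R_k-\beta n>0$, so $W$ is well defined and we are genuinely on the first branch of $g_k$.

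For the second regime, $n \ge R_k$, I would substitute $W=\frac{a_k R_k\beta}{1-\beta}$ into the second branch $g_k(n,W)=\frac{W(1-\beta)-a_k R_k\beta}{1-\rho_k\beta}$. The factor $1-\beta$ in the numerator cancels the denominator of $W$, so the numerator becomes $a_k R_k\beta-a_k R_k\beta=0$; since $1-\rho_k\beta=1-\beta/R_k>0$ (as $\beta\le 1<R_k$), the quotient is well defined and equals zero. Applying Proposition~\ref{prop:optimality_condition_of_n} in each case then yields that $n$ is an optimal threshold, completing both parts.

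There is essentially no analytical obstacle here: once the closed forms for $g_k$ from Proposition~\ref{prop:expression_of_g(n,w)} and the optimality criterion from Proposition~\ref{prop:optimality_condition_of_n} are in hand, the argument is a two-line substitution in each branch. The only points requiring a brief check are the sign conditions $R_k-\beta n>0$ and $1-\rho_k\beta>0$ that keep the expressions well defined, both of which follow immediately from $\beta\le 1$ and $R_k\ge 2$. The genuine content of the result lies entirely upstream, in the derivation of the $g_k$ formulas and in Proposition~\ref{prop:optimality_condition_of_n}, rather than in this final verification.
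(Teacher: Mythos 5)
Your proposal is correct and follows essentially the same route as the paper's own proof: both reduce the claim to the criterion $C_1^n(n)=C_0^n(n)$ of Proposition~\ref{prop:optimality_condition_of_n} and verify, using the closed forms of $g_k(n,W)$ from Proposition~\ref{prop:expression_of_g(n,w)}, that each prescribed value of $W$ is the root of the corresponding branch. Your version merely spells out the algebraic cancellation and the sign conditions ($R_k-\beta n>0$, $1-\rho_k\beta>0$) more explicitly than the paper does, which is a harmless refinement rather than a different argument.
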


\begin{proof}
See Appendix \ref{app:threshold_condition}   
\end{proof}

In order to establish the Whittle indices we study the function $g_k$ defined in definition \ref{function_g}.
\begin{Lemma}
$g_k$ is strictly increasing in $W$, and decreasing in $n$.
\end{Lemma}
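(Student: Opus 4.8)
The plan is to read both monotonicity claims directly off the closed-form expressions supplied by Proposition~\ref{prop:expression_of_g(n,w)}. Since $g_k$ is defined on $[0,+\infty[\,\times\,\mathbb{R}$ (Definition~\ref{function_g}) and is piecewise affine in $W$ with coefficients that are explicit in $n,\beta,a_k$ and $\rho_k=1/R_k$, I would treat $n$ as a continuous variable and simply compute and sign the partial derivatives $\partial g_k/\partial W$ and $\partial g_k/\partial n$ on the two regimes $0\le n\le R_k-1$ and $n\ge R_k$, using only the standing facts $0<\beta<1$, $a_k>0$, $R_k\ge 2$ and $0\le n\rho_k\le (R_k-1)/R_k<1$.

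For the $W$-monotonicity: on $0\le n\le R_k-1$ one has $\partial g_k/\partial W = 1-n\beta\rho_k$, and since $n\beta\rho_k\le n\rho_k\le (R_k-1)/R_k$ this coefficient is bounded below by $1/R_k>0$; on $n\ge R_k$ one has $\partial g_k/\partial W = (1-\beta)/(1-\rho_k\beta)$, whose numerator is positive because $\beta<1$ and whose denominator lies in $(0,1)$ because $\rho_k\beta=\beta/R_k<1$. Hence $g_k$ is strictly increasing in $W$, uniformly on each regime, and this half is immediate with no junction to check.

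For the decreasing-in-$n$ claim I would argue in three steps. On $0\le n\le R_k-1$, $\partial g_k/\partial n = -\beta\,(W\rho_k+a_k)$, which is strictly negative precisely when $W>-a_kR_k$. On $n\ge R_k$ the expression for $g_k$ does not depend on $n$, so $g_k$ is constant there (hence non-increasing). It then remains to rule out an upward jump at the junction $n=R_k$: I would form $\Delta(W)\triangleq g_k(R_k-1,W)-g_k(R_k,W)$, note that it is affine in $W$, and check that both its slope and its value at $W=0$ are positive. A short computation gives slope $(R_k-1)\beta^2/\bigl(R_k(R_k-\beta)\bigr)>0$ and $\Delta(0)=a_k\beta\,\bigl(R_k(1+\beta)-\beta\bigr)/(R_k-\beta)>0$ (using $R_k\ge 2$), so $\Delta(W)\ge\Delta(0)>0$ for all $W\ge 0$, i.e. $g_k$ still decreases across the boundary.

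The point I expect to be the genuine obstacle is that decreasing-in-$n$ is \emph{not} valid for all real $W$: when $W<-a_kR_k$ the regime-1 derivative changes sign, so the statement must be read on the relevant range of subsidies, namely $W\ge 0$ (equivalently $W>-a_kR_k$). This is exactly the range that matters, since every indifference value produced by Proposition~\ref{prop:threshold_condition}, $W_k(n)=\beta a_kR_kn/(R_k-\beta n)$, is nonnegative. Finally I would record the payoff: combining $\partial g_k/\partial W>0$ with $\partial g_k/\partial n<0$ and implicitly differentiating the relation $g_k(n,W_k(n))=0$ yields $W_k'(n)=-(\partial g_k/\partial n)/(\partial g_k/\partial W)>0$, so the indices increase in $n$, which is precisely the ordering underlying the indexability of Proposition~\ref{prop_wi}.
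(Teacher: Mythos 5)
Your proof is correct and takes essentially the same route as the paper: the paper's entire proof is the one-line assertion that the monotonicity is clear from the expressions of Proposition~\ref{prop:expression_of_g(n,w)}, which is exactly what you make rigorous by signing the partial derivatives in each regime. Your two additions --- the check that there is no upward jump at the junction $n=R_k$ between the two piecewise formulas (your slope $\beta^2(R_k-1)/\bigl(R_k(R_k-\beta)\bigr)$ and value $\Delta(0)=a_k\beta\bigl(R_k(1+\beta)-\beta\bigr)/(R_k-\beta)$ are both correct), and the observation that decreasing-in-$n$ literally fails for $W<-a_kR_k$ so the statement must be read on the relevant range $W\ge 0$ --- are details the paper silently omits, and they strengthen rather than change the argument.
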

\begin{proof}
$g_k$ is clearly strictly increasing in $W$ and decreasing in $n$ from its expression.
\end{proof}    

To prove that the Whittle index for a given state $n$ is a given $W_k(n)$ in class-$k$, we have to demonstrate that for all $W \leq W_k(n)$, at state $n$ the decision must be the active action. In other words, since the optimal solution is surely a threshold policy, we need to prove that for all states greater than $n$, they cannot be the optimal threshold. For that, we will suppose that if the optimal threshold is higher than $n$, including the case of infinite threshold, and we will prove that there is a contradiction.

\begin{proposition}\label{prop:finite_threshold_condition}
If $W < \frac{a_kR_k \beta}{1-\beta}$, then the optimal threshold is surely finite.
\end{proposition}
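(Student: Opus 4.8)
I would argue by contradiction. Since we have already established (as in Section~\ref{sec:threshold policy}, carried over to the discounted Bellman equation~\eqref{eq:individual_discounted_cost}) that the optimal policy for each individual subproblem is a threshold policy, the only way for the optimal threshold to fail to be finite is for it to be infinite, i.e. for the optimal policy to never schedule the queue. So I would suppose that the never-schedule policy (passive action in every state) is optimal and then exhibit a state at which a one-step deviation to the active action strictly lowers the discounted cost, contradicting optimality through the Bellman equation~\eqref{eq:individual_discounted_cost}.

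The first concrete step is to compute the value function $V^{\infty}$ of the never-schedule policy in closed form. Under $s_i^k \equiv 0$ the queue evolves as $q_i^k(t+1)=q_i^k(t)+A_i^k(t)$, so starting from $q$ one has $E[q_i^k(t)]=q+t\,\bar A_k$ with $\bar A_k=\sum_{j=0}^{R_k-1} j\rho_k=\tfrac{R_k-1}{2}$. Summing the discounted holding cost gives
\begin{equation*}
V^{\infty}(q)=a_k\sum_{t=0}^{+\infty}\beta^t\big(q+t\,\bar A_k\big)=\frac{a_k}{1-\beta}\,q+\frac{a_k\bar A_k\,\beta}{(1-\beta)^2},
\end{equation*}
which is finite (because $\beta<1$ and the arrivals are bounded) and, crucially, affine in the initial state $q$ with slope $a_k/(1-\beta)$ independent of $q$.

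Next I would evaluate the one-step active/passive gap at an arbitrary state $q\ge R_k$, using $V^{\infty}$ as the continuation value (legitimate since after the first slot the assumed-optimal policy stays passive). Writing $C_0^{\infty}(q)=a_kq+\beta E[V^{\infty}(q+A_i^k)]$ and $C_1^{\infty}(q)=a_kq+W+\beta E[V^{\infty}((q-R_k)^{+}+A_i^k)]$, the affine form of $V^{\infty}$ makes the arrival term and the additive constant cancel, leaving only the deterministic shift by $-R_k$:
\begin{equation*}
C_1^{\infty}(q)-C_0^{\infty}(q)=W+\beta\Big(-\frac{a_k R_k}{1-\beta}\Big)=W-\frac{\beta a_k R_k}{1-\beta}.
\end{equation*}
By hypothesis $W<\frac{a_k R_k\beta}{1-\beta}$, so this gap is strictly negative: at state $q$ the active action is strictly cheaper than the passive one, which contradicts the optimality of the never-schedule policy. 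Hence the optimal threshold must be finite. I would also note, as a consistency check, that the sign obtained here coincides with that of $g_k(n,W)=\frac{W(1-\beta)-a_kR_k\beta}{1-\rho_k\beta}$ for $n\ge R_k$ from Proposition~\ref{prop:expression_of_g(n,w)}: the numerators agree and both denominators are positive.

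The step I expect to require the most care is the justification that ruling out the never-schedule policy is genuinely sufficient: one must invoke that the optimal policy is of threshold type so that ``not finite'' forces ``infinite,'' and one must be careful to apply the one-step deviation against the correct continuation value $V^{\infty}$ rather than against a finite-threshold value function $V^n$ (the latter is what enters $g_k$ and carries the different denominator $1-\rho_k\beta$). Everything else, namely the geometric-type summation giving $V^{\infty}$ and the cancellation in the active/passive gap, is routine.
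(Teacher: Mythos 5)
Your proof is correct, and it follows the same contradiction skeleton as the paper's: assume the infinite threshold (never-schedule) policy is optimal, evaluate the active/passive gap at a state $q \geq R_k$ against the continuation value $V^{\infty}$, and derive a contradiction when $W < \frac{a_k R_k \beta}{1-\beta}$. Where you genuinely diverge is in the key lemma. The paper (Lemma~\ref{lem:ineq_infinite_val_funct}) only establishes the one-sided bound $V^{\infty}(q+R_k)-V^{\infty}(q) \geq \frac{a_k R_k}{1-\beta}$, and does so structurally: it writes the passive-action recursion for $V^{\infty}$, invokes the $R_k$-convexity of $V^{\infty}$ (a property carried over from the value-iteration analysis of Section~\ref{sec:threshold policy}), and solves the resulting self-referential inequality. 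You instead exploit the fact that under the never-schedule policy the dynamics are pure accumulation, so $V^{\infty}$ can be computed in closed form and is affine in $q$ with slope $a_k/(1-\beta)$; this yields the exact equality $V^{\infty}(q+R_k)-V^{\infty}(q) = \frac{a_k R_k}{1-\beta}$ and hence the exact gap $C_1^{\infty}(q)-C_0^{\infty}(q) = W - \frac{\beta a_k R_k}{1-\beta}$, where the paper gets only an upper bound. Your route is more elementary and self-contained: it does not rely on $R_k$-convexity, which the paper never proves in full for the infinite-buffer discounted setting but only sketches by analogy with \cite{kriouile}. The paper's convexity route, in exchange, is more robust: it would survive situations where the passive-phase value function admits no tractable closed form (e.g., state-dependent or more general arrival processes), whereas your computation leans on the i.i.d.\ bounded arrivals making $E[q(t)] = q + t\bar A_k$ exactly. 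Your closing caveats are well placed, and both proofs share the same (standard, unstated) reliance on the fact that a strictly profitable one-step deviation contradicts optimality of a stationary policy in a discounted MDP.
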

\begin{proof}
See Appendix \ref{app:finite_threshold_condition}
\end{proof}

\begin{proposition}\label{prop:w.i}
For each queue state $n$ in class-$k$, the Whittle index expression is given by:\\
For $0 \leq n \leq R_k-1$, $W_k(n)=\frac{\beta a_k R_k n}{R_k- \beta n}$.\\ 
For $n \geq R_k$, $W_k(n)=\frac{a_kR_k \beta}{1-\beta}$. 
\end{proposition}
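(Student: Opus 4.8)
The plan is to identify $W_k(n)$ with the unique solution in $W$ of the indifference equation $g_k(n,W)=0$, and then read off the closed form from Proposition \ref{prop:expression_of_g(n,w)}. Recall that $g_k(n,W)=C_1^n(n)-C_0^n(n)$ measures, under the threshold-$n$ policy, how much more costly it is to activate state $n$ than to leave it passive. Setting this to zero and using the two branches of Proposition \ref{prop:expression_of_g(n,w)}, I would solve $W(1-n\beta\rho_k)-a_kn\beta=0$ for $0\le n\le R_k-1$, which yields $W=\beta a_kR_kn/(R_k-\beta n)$ after substituting $\rho_k=1/R_k$, and $W(1-\beta)-a_kR_k\beta=0$ for $n\ge R_k$, which yields $W=a_kR_k\beta/(1-\beta)$. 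Denote this candidate value by $w_n^*$; it coincides exactly with the thresholds appearing in Proposition \ref{prop:threshold_condition}. It then remains to show that $w_n^*=\min\{W\mid n\in D(W)\}$, which I would split into an upper and a lower bound.

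For the upper bound $W_k(n)\le w_n^*$, I would argue that state $n$ is passive-optimal at $W=w_n^*$. By Proposition \ref{prop:threshold_condition}, $n$ is an optimal threshold at this subsidy, so the threshold-$n$ value function is genuinely optimal; since $g_k(n,w_n^*)=0$ gives $C_0^n(n)=C_1^n(n)$, the passive and active actions tie at state $n$, so passive is (weakly) optimal and $n\in D(w_n^*)$.

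For the lower bound $W_k(n)\ge w_n^*$, I would fix any $W<w_n^*$ and show state $n$ is strictly active, i.e. $n\notin D(W)$. Since $w_n^*\le a_kR_k\beta/(1-\beta)$ (with equality when $n\ge R_k$), we have $W<a_kR_k\beta/(1-\beta)$, so Proposition \ref{prop:finite_threshold_condition} rules out an infinite threshold and guarantees the optimal threshold $m$ is finite. Suppose for contradiction that $m\ge n$, so that state $n$ is passive. The optimal threshold being $m$ forces passive to be optimal at the boundary state $m$, giving $C_0^m(m)\le C_1^m(m)$, i.e. $g_k(m,W)\ge 0$. On the other hand, the preceding Lemma gives $g_k$ decreasing in $n$ and strictly increasing in $W$, whence $g_k(m,W)\le g_k(n,W)<g_k(n,w_n^*)=0$, a contradiction. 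Hence $m<n$, state $n$ is active, and $n\notin D(W)$. Combining the two bounds with indexability (Proposition \ref{prop_wi}) then gives $W_k(n)=w_n^*$.

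The main obstacle is the lower bound, and specifically the step converting ``$m$ is an optimal threshold with $m\ge n$'' into $g_k(m,W)\ge 0$: one must be careful that $g_k$ is defined through the threshold-$m$ value function $V^m$, so this inequality is legitimate only because $V^m$ is genuinely optimal when $m$ is an optimal threshold. The finiteness guarantee of Proposition \ref{prop:finite_threshold_condition} is precisely what licenses working with a bona fide boundary state $m$, ruling out the degenerate all-passive policy; once a finite $m$ is in hand, the monotonicity of $g_k$ closes the argument uniformly across both the $n\le R_k-1$ and $n\ge R_k$ regimes.
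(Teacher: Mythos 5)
Your proposal is correct and follows essentially the same route as the paper's proof: both identify the candidate index as the root of $g_k(n,W)=0$, use Proposition \ref{prop:threshold_condition} to get $n\in D(w_n^*)$ at that subsidy, and for $W<w_n^*$ combine the monotonicity of $g_k$ (in $W$ and in $n$) with the finite-threshold guarantee of Proposition \ref{prop:finite_threshold_condition} to rule out any optimal threshold $\ge n$, so that $n\notin D(W)$. Your contradiction with a finite boundary state $m$ is just a slight rephrasing of the paper's direct argument that no state $q\ge n$ can be an optimal threshold since $g_k(q,W)\le g_k(n,W)<0$.
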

\begin{proof}
See Appendix \ref{app:w.i}
\end{proof}

We know that for $\beta \rightarrow 1$, the solution for the problem \ref{eq:relaxed_dual_discounted_cost_individual} is the same as the problem \ref{eq:relaxed}, see \cite{ross}.
Hence, to derive the Whittle index for the expected average cost's case, we must tend $\beta$ to $1$. However, for states greater or equal than $R$, the Whittle indices tend to $+\infty$. 
On the other hand, by looking at our policy which consists on selecting the users at states with the $M$ highest Whittle index values, we can notice that this policy is the same if the order of the Whittle indices from the biggest to the smallest one is not affected even if the Whittle index values are modified. In the following, we denote by $W_k(n)$ the Whittle index of state $n$ at class-k.
\begin{thm}\label{prop:new_policy} 
For any $\beta > 1-\frac{\min\{a_j R_j\}}{\max\{a_j R_j^2\}}$, the Whittle index policy where the Whittle indices in each class $k$ are given by:\\
For $0 \leq n < R_k$: $W_k(n)=\frac{\beta a_k R_k n}{R_k- \beta n}$\\  
For $R_k \leq n$: $W_k(n)=a_k R_k \max \{a_j R_j^2\}$\\
 is exactly the Whittle index policy when the Whittle indices are given by proposition \ref{prop:w.i}.
\end{thm}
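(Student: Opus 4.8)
The plan is to use the only operationally relevant feature of the Whittle index policy: at each slot it schedules the $M$ users currently carrying the largest index values. Consequently, two index families define the same policy as soon as they induce the same total order on the set of all (class, state) pairs $(k,n)$. Since the family of Proposition~\ref{prop:w.i} and the family in the statement coincide verbatim on the range $0\le n<R_k$, I only have to control two things: the order among the ``saturated'' states $n\ge R_k$, and the order between a saturated state and a non-saturated one. Everything on the range $0\le n<R_k$ is inherited unchanged.

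The first point is immediate. In the original family every saturated index equals $\frac{\beta}{1-\beta}\,a_kR_k$, and in the new family it equals $\big(\max_j a_jR_j^2\big)\,a_kR_k$; both are a single fixed positive constant times $a_kR_k$ and are independent of $n$. Since a positive rescaling preserves order, the two families order the saturated states identically, namely by the value of $a_kR_k$.

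The core of the argument is to show that in \emph{both} families the whole saturated block sits strictly above the whole non-saturated block, which makes the cross-region order the same (saturated above non-saturated) in each. On the non-saturated side, $W_k(n)=\frac{\beta a_kR_kn}{R_k-\beta n}$ is manifestly increasing in $n$ (increasing numerator over decreasing positive denominator), so its maximum over $0\le n\le R_k-1$ is attained at $n=R_k-1$; since $\beta<1$ one bounds this maximum above by $a_kR_k(R_k-1)$, and an elementary lemma gives $\max_k a_kR_k(R_k-1)\le \max_j a_jR_j^2-\min_j a_jR_j$ (comparing the maximizers of $a_jR_j^2$ and of $a_jR_j(R_j-1)$ and using $R_j^2-R_j(R_j-1)=R_j$). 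It then remains to check that the smallest saturated index clears this bound in each family. For the original family the hypothesis $\beta>1-\frac{\min_j a_jR_j}{\max_j a_jR_j^2}$ rearranges to $\frac{1}{1-\beta}>\frac{\max_j a_jR_j^2}{\min_j a_jR_j}$, whence $\frac{\beta}{1-\beta}\min_j a_jR_j=\big(\frac{1}{1-\beta}-1\big)\min_j a_jR_j>\max_j a_jR_j^2-\min_j a_jR_j$, so the minimal saturated index exceeds the maximal non-saturated one. For the new family the fixed constant $\max_j a_jR_j^2$ plays the same role: one checks the analogous inequality $\min_j a_jR_j\cdot\max_j a_jR_j^2\ge \max_j a_jR_j^2-\min_j a_jR_j$, again placing the minimal saturated index above the non-saturated block.

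Combining the pieces finishes the proof: the two families agree on the non-saturated states, order the saturated states in the same way, and in both cases place the entire saturated block above the entire non-saturated block; hence they induce identical total orders and therefore identical scheduling decisions. The main obstacle, and the only place the lower bound on $\beta$ is genuinely used, is the cross-region step: one must simultaneously replace the $\beta$-dependent original value $\frac{\beta}{1-\beta}a_kR_k$, which blows up as $\beta\to1$, by the fixed surrogate $(\max_j a_jR_j^2)\,a_kR_k$, and still guarantee that both clear the non-saturated maximum. The threshold $1-\frac{\min_j a_jR_j}{\max_j a_jR_j^2}$ is calibrated exactly so that the original saturated block stays on top, while the choice of the surrogate constant $\max_j a_jR_j^2$ is what keeps the new saturated block on top; verifying the latter elementary inequality (where the relative sizes of $\min_j a_jR_j$ and $\max_j a_jR_j^2$ matter) is the delicate point of the calculation.
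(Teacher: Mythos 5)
Your proof follows the paper's argument essentially step for step: both reduce the claim to showing that the two index families induce the same total order on (class, state) pairs, observe that they coincide verbatim on $0\le n<R_k$, note that on the saturated block each family is a fixed positive constant times $a_kR_k$ (so the saturated states are ordered identically in the two families), and then establish the cross-region separation --- saturated block above non-saturated block --- using the hypothesis on $\beta$ for the original family and the surrogate constant $\max_j a_jR_j^2$ for the new one. Your bound $W_k(R_k-1)\le \beta a_kR_k(R_k-1)$ via $R_k-\beta(R_k-1)\ge 1$, and your lemma $\max_k a_kR_k(R_k-1)\le \max_j a_jR_j^2-\min_j a_jR_j$, are exactly the estimates the paper uses, only packaged globally (block against block) rather than pairwise (class $k$ against class $m$).

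The one step you leave unverified is, however, precisely where the argument is fragile. Writing $m=\min_j a_jR_j$ and $M=\max_j a_jR_j^2$, the ``analogous inequality'' you invoke for the new family, $mM\ge M-m$, is equivalent to $m\ge M/(M+1)$ and is false in general: take all weights small, e.g.\ $a_1=a_2=0.01$, $R_1=2$, $R_2=5$, so that $m=0.02$, $M=0.25$, and $mM=0.005 < 0.23 = M-m$. (In that example the conclusion itself breaks down: for $\beta$ close to $1$ the class-$2$ saturated index tops the original order, while in the new family it drops below several non-saturated indices.) So, as written, this step fails. To be fair, the paper's own proof commits the same unjustified move at the same spot: its chain uses $a_kR_k\,a_mR_m(R_m-1)\ge a_mR_m(R_m-1)$, which requires $a_kR_k\ge 1$. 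Under that implicit normalization ($\min_j a_jR_j\ge 1$) your inequality holds too, since then $mM\ge M> M-m$. In short, your proposal reproduces the paper's proof faithfully, including its hidden assumption; you should state the normalization $a_jR_j\ge 1$ (or verify $m\ge M/(M+1)$) explicitly instead of asserting the inequality.
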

\begin{proof}
See Appendix \ref{app:new_policy}
\end{proof}         
When $\beta \rightarrow 1$, the condition given in Theorem \ref{prop:new_policy} is still true, and hence, we get the Whittle index policy for our original problem with expected average cost. The policy consists in allocating the channels (or servers) to the $M$ users having the highest $M$ Whittle index values computed in the aforementioned theorem. 
We can notice that the Whittle index of states greater than the maximum transmission rate are different only by $a_k$ and $R_k$ ($W_k(n)=a_k R_k \max \{a_j R_j^2\}$). It is worth mentioning that the obtained policy can be seen as $c \mu$ rule when all states are greater than $R_k$, since we choose $M$ users with the highest $a_k R_k$. %In other words, for some states the obtained Whittle index policy is a 
%\subsection{Asymptotic optimality}
%In this section, we will show that the performance of whittle index policy is asymptotically optimal when the buffer length is finite. That means, for large number of users $N$ and large number of channels $M$ ($\alpha=\frac{M}{N}$ is a constant value), the whittle index policy is optimal.
%For that we will compare the average cost obtained by the whittle index policy WI with the one obtained for the relaxed problem RP. In \cite{1807.00352}, we gave the explicit solution of the relaxed problem RP which will be useful to prove the optimality.\\
%In fact, if we denote $C^{WI,N}$ the average cost of whittle index policy, $C^{OP}$ the optimal cost of the original problem with instantaneous inequality and $C^{RP,N}$ the optimal cost of the relaxed problem,  we have: $C^{WI,N} \geq C^{OP} \geq C^{RP,N}$.\\
%Hence, in order to show the local asymptotic optimality, we just need to prove that for large $N$, $C^{WI,N}$ converges to $C^{RP,N}$ (this directly implies that $C^{WI,N}$ will converge to $C^{OP}$).\\   
%\begin{proposition}\label{prop:local_optimality}
%$$\lim_{N \rightarrow \infty} \frac{C^{WI,N}}{N}=\frac{C^{RP,N}}{N}$$ 
%\end{proposition}
%\begin{proof}
%See \cite{1807.00352}
%\end{proof}
\section{Numerical Results}\label{sec:numerics}
In this section, we show that the Whittle index policy (denoted by WI) shows good performance when the number of users is large. Since computing the optimal policy is computationally prohibitive, we take advantage of the fact that the the optimal cost of the relaxed problem denoted by $C^{RP,N}$ is less than the optimal one of the original problem. We therefore, compare between the cost obtained by our policy and the one obtained for the relaxed problem, for which a simple threshold policy is the optimal one. The gap between these two policies is then an upper bound of the gap between our policy and the optimal one (in terms of achieved average cost). In addition, we compare the average cost given by our policy WI with the one given by the myopic policy or the Max-Weight which schedule the $M$ queues that have the highest instantaneous incurred delay cost. We denote $C^{WI,N}$ the average cost given by the policy WI and $C^{MD,N}$ the average cost given by the myopic policy. We plot the results on Figures 1 and 2 where we consider two user classes of users with their respective transmission rate $R_1$ and $R_2$, and the number of servers is equal to $N/2$ where N is the number of users. In Figure 1 we take $R_1=5$ and $R_2=20$,  while  in Figure 2 the values are $R_1=10$ and $R_2=45$. According to these figures, one can show that our policy WI is asymptotically optimal, and performs much better than the myopic policy. This confirms our main motivation behind developing the Whittle index policy as presented before in the paper.

\begin{figure}
\centering
\includegraphics[width=0.995\linewidth]{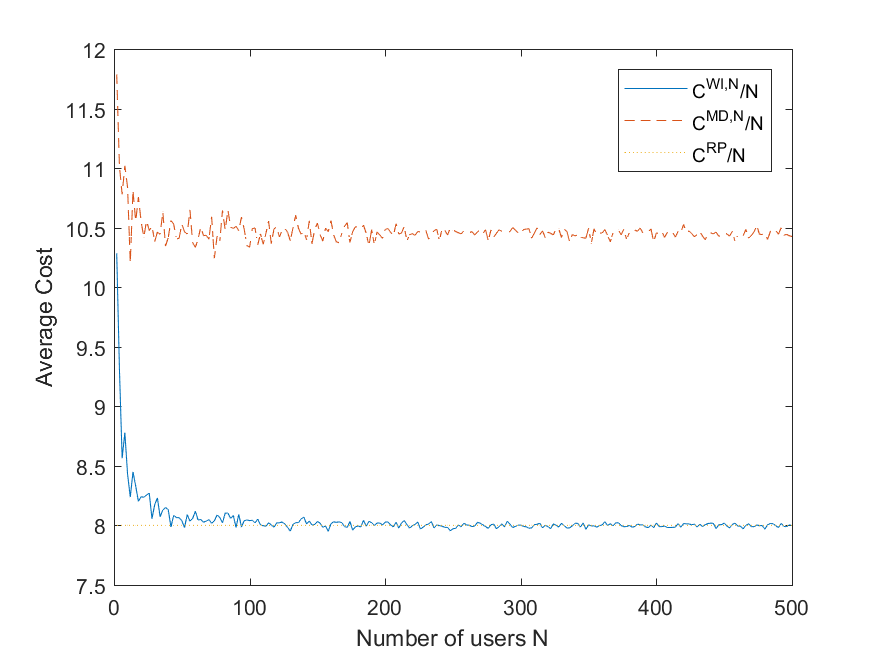}
\caption{Comparison between the average costs of the proposed policy WI, the optimal policy of relaxed problem, and the myopic policy}
\end{figure}

\begin{figure}
\centering
\includegraphics[width=0.995\linewidth]{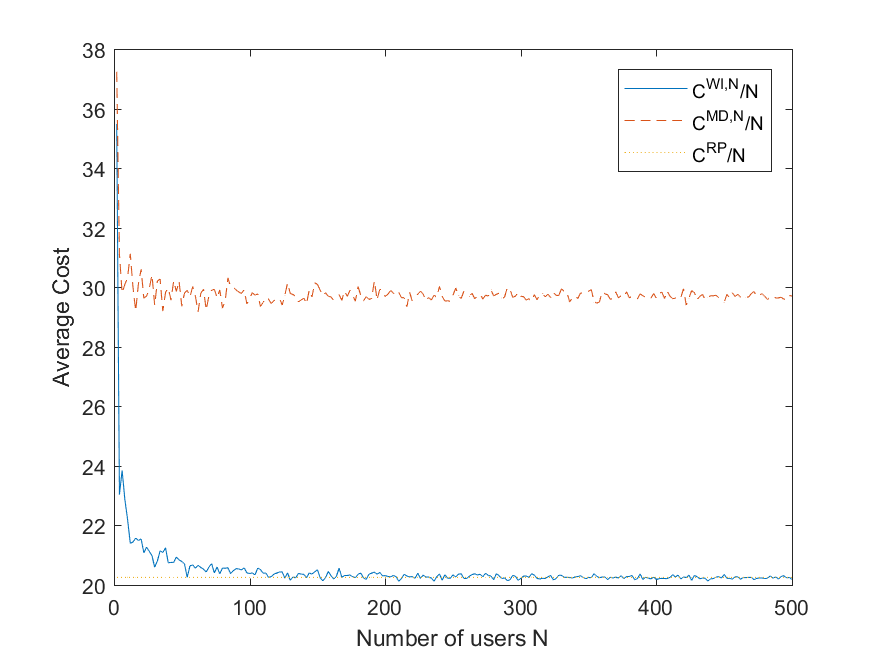}
\caption{Comparison between the average costs of the proposed policy WI, the optimal policy of relaxed problem, and the myopic policy}
\end{figure}

\section{Conclusion}
We considered the problem of resource allocation in a queueing system composed of $N$ queues and $M$ servers. We showed that minimizing the average expected queue length, in a time discrete slotted system, is a Restless Bandit Problem, for which finding the optimal solution is out of reach. We therefore developed a simple Whittle index policy for this problem. While the previous works on Whittle index for time discrete queueing systems have been mainly limited to the context of finite buffer length models, we provided in this paper an extension and derived an index policy without restricting the buffer length to be always less than an fixed value. Our development rely on the idea to introduce a new discount factor in the cost function, to derive the Whittle index as function of this discount factor and then obtain the index value of the original problem by taking the limit when this factor tends to 1. Numerical results show that our policy is asymptotically optimal in the many user regime. 
%\nocite{ansell}
%\nocite{buyukkoc1985c}
%\nocite{larranaga2015dynamic}
%\nocite{ross2014introduction}
%\nocite{Whittle1988restless}
%\nocite{papadimitriou1999complexity}
%\nocite{liu2010indexability}

%\bibliographystyle{plain} \bibliography{bil}

\begin{appendices}

\section{Proof of proposition \ref{prop:expression_of_g(n,w)}}\label{app:expression_of_g(n,w)}
1) $n \leq R_k-1$:\\
We start first by giving a useful lemma
\begin{Lemma}\label{lem:velue_function_difference}
For all $0 \leq i \leq p \leq R_k-1$, $$V^p(R_k+i)-V^p(i)=a_kR_k+W$$
\end{Lemma}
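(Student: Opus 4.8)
The plan is to exploit the key structural coincidence that, under the \emph{fixed} threshold policy $p$, the two states $i$ and $R_k+i$ (with $0 \leq i \leq p \leq R_k-1$) prescribe opposite actions yet lead to \emph{exactly the same} distribution over next states. Once this is observed, all discounted future terms cancel when the two policy-evaluation equations are subtracted, and only the one-step cost difference survives, giving $a_k R_k + W$ directly with no induction.

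First I would write down the policy-evaluation equation for $V^p$ at each of the two states. Since $i \leq p$, state $i$ lies in the passive region ($s=0$), so
\begin{equation*}
V^p(i)=a_k i + \beta \sum_{q'} Pr(q'\mid i,0)\,V^p(q').
\end{equation*}
Since $R_k+i \geq R_k > R_k-1 \geq p$, state $R_k+i$ lies in the active region ($s=1$), so
\begin{equation*}
V^p(R_k+i)=a_k(R_k+i)+W + \beta \sum_{q'} Pr(q'\mid R_k+i,1)\,V^p(q').
\end{equation*}
The decisive step is to verify that the two transition kernels agree. Using the dynamics $q'=(q-R_k s)^+ + A$ with $A$ uniform on $\{0,\ldots,R_k-1\}$: the passive transition from $i$ yields $q'=(i)^+ + A = i+A$ (as $i\geq 0$), while the active transition from $R_k+i$ yields $q'=(R_k+i-R_k)^+ + A = i+A$. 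Both states therefore move to $i+A$ with $A$ uniform, i.e.\ $Pr(\cdot\mid i,0)=Pr(\cdot\mid R_k+i,1)$.

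Subtracting the two equations, the identical discounted sums cancel and I am left with
\begin{equation*}
V^p(R_k+i)-V^p(i)=a_k(R_k+i)+W-a_k i = a_k R_k + W,
\end{equation*}
which is the claim. The only point requiring care — the mild ``obstacle'' — is confirming that the hypothesis $i \leq p \leq R_k-1$ is exactly what is needed: it places $i$ in the passive region, forces $R_k+i$ strictly above the threshold into the active region, and ensures the $(\cdot)^+$ truncation acts trivially (no clipping at zero) so that the two one-step distributions genuinely coincide. With this alignment established, the identity is immediate and involves no manipulation of $V^p$ itself.
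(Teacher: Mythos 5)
Your proof is correct and is essentially the paper's own argument: both write the one-step policy-evaluation equations at the passive state $i$ and the active state $R_k+i$, observe that both transition to $i+A$ with $A$ uniform on $\{0,\ldots,R_k-1\}$ (the paper encodes this by writing the same sum $\rho_k\beta\sum_{j=0}^{R_k-1}V^p(i+j)$ in both equations), and subtract so that the discounted future terms cancel. Your explicit check that the $(\cdot)^+$ truncation is inactive and that $i\leq p\leq R_k-1$ forces the two actions is a welcome bit of added care, but the route is the same.
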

\begin{proof}
We decompose the discounted cost $V^p(i+R_k)$ in the cost incurred at first time slot plus the discounted cost starting at the next time slot. At state $i+R_k$, the decision taken is to transmit since $i+R_k \geq R_k > p$, and at state $i$, the decision taken is passive action since $i \leq p$, hence,
$$V^p(i+R_k)=a_k(i+R_k)+W+ \rho_k \beta \sum_{j=0}^{R_k-1} V^p(i+j)$$
$$V^p(i)=a_ki+ \rho_k \beta \sum_{j=0}^{R_k-1} V^p(i+j)$$
Subtracting the second term from the first term,
$$V^p(i+R_k)-V^p(i)=a_kR_k+W$$
\end{proof}
We have
$$C_1^n(n)=a_kn+W+ \beta \rho_k \sum_{i=0}^{R_k-1} V^n(i)$$
$$C_0^n(n)=a_kn + \beta \rho_k \sum_{i=0}^{R_k-1} V^n(i+n)$$
That means,
$$C_1^n(n)-C_0^n(n)=W+ \beta \rho_k \sum_{i=0}^{R_k-1} V^n(i) -\beta \rho_k \sum_{i=0}^{R_k-1}V^n(i+n)$$
$$=W+ \beta \rho_k \sum_{i=0}^{R_k-1} V^n(i)- \beta \rho \sum_{i=n}^{R_k-1+n} V^n(i)$$
$$=W+ \beta \rho_k \sum_{i=0}^{n-1} V^n(i)- \beta \rho_k \sum_{i=R_k}^{R_k-1+n} V^n(i)$$
$$=W+ \beta \rho_k \sum_{i=0}^{n-1} V^n(i)- \beta \rho_k \sum_{i=0}^{n-1} V^n(i+R_k)$$
%$$=W+ \beta \rho \sum_{i=0}^{n-1} [V^n(i)-V^n(i+R)]$$
Applying the lemma \ref{lem:velue_function_difference},
$$C_1^n(n)-C_0^n(n)=W - \beta \rho_k n (a_kR_k+W)$$
$$C_1^n(n)-C_0^n(n)=W(1-\beta \rho_k n) - \beta n a_k $$

2)$n\geq R_k$:\\
We consider a given threshold $n \geq R_k$, i.e., for states less than $n$, we don't transmit otherwise we transmit.
At state $n$ if we decide to transmit then the next possible states are $n-R_k+i$ ($i$ varies from $0$ to $R_k-1$) with the probability to reach each state is $\rho_k$, hence, we have,
\[C_1^n(n)=a_kn+ W+ \beta \rho_k \sum_{i=0}^{R_k-1} V^n(n-R_k+i)\]
At state $n+i-R_k$, since $n-R_k+i < n$ then, the decision taken is passive action ($n$ is threshold), thus if we decompose again  $V^n(n-R_k+i)$, $V^n(n+i-R_k)= a_k (n+i-R_k)+\beta \rho_k \sum_{j=0}^{R_k-1} V^n(n-R_k+i+j)$, Replacing $V^n(n+i-R_k)$ by its value,
\[C_1^n(n)=a_kn+W +\beta \rho_k \sum_{i=0}^{R_k-1} [a_k(n+i-R_k)+ \rho_k \beta \sum_{j=0}^{R_k-1} V^n(n-R_k+i+j)]\]
\[C_1^n(n)=a_kn+W +\beta \rho_k \sum_{i=1}^{R_k-1} [a_k(n+i-R_k)+ \rho_k \beta \sum_{j=0}^{R_k-1} V^n(n-R_k+i+j)]\] 
\[+\beta \rho_k[a_k(n-R_k)+ \rho_k \beta \sum_{j=0}^{R_k-1} V^n(n-R_k+j)]\]
We know that,
$$C_1^n(n)=a_kn+ W+ \beta \rho_k \sum_{j=0}^{R_k-1} V^n(n-R_k+j)$$
Hence,
\[C_1^n(n)=a_kn+W+ \beta \rho_k \sum_{i=1}^{R_k-1} [a_k(n+i-R_k)+ \rho_k \beta \sum_{j=0}^{R_k-1} V^n(n-R_k+i+j)]\]
\[+\beta \rho_k C_1^n(n)- \rho_k \beta W - a_k \beta\]
That means,
$$C_1^n(n)=\frac{1}{1- \rho_k \beta}[a_kn+W+ \beta \rho_k \sum_{i=1}^{R_k-1} [a_k(n+i-R_k)$$ $$+ \rho_k \beta \sum_{j=0}^{R_k-1} V^n(n-R_k+i+j)]- \rho_k \beta W - a_k \beta]$$
$$C_1^n(n)=\frac{1}{1- \rho_k \beta}[a_kn+W +\beta \rho_k \sum_{i=1}^{R_k-1} a_k(n+i-R_k)$$ $$+ \rho_k^2 \beta^2 \sum_{i=1}^{R_k-1}\sum_{j=0}^{R_k-1} V^n(n-R_k+i+j)- \rho_k \beta W - a_k \beta]$$

At state $n$ if we decide to not transmit then the next possible states are $n+i$ with the probability to reach each state is $\rho_k$ hence, we have,
$C_0^n(n)=an+ \beta \rho_k \sum_{i=0}^{R_k-1} V^n(n+i)$
At state $n+i$ for $i > 0$, since $n+i > n$ then, the decision taken is active action ($n$ is threshold), thus if we decompose again $V^n(n+i)$, $V^n(n+i)= a_k (n+i)+W+\beta \rho_k \sum_{j=0}^{R_k-1} V^n(n-R_k+i+j)$, Replacing $V^n(n+i)$ when $i>0$ by its value,
\[C_0^n(n)=a_kn+ \beta \rho_k \sum_{i=1}^{R_k-1} [a_k(n+i)+W\]
\[+\rho_k \beta \sum_{j=0}^{R_k-1} V^n(n-R_k+i+j)]+\beta \rho_k V^n(n)\] 
However $V^n(n)$ is no more than $C_0^n(n)$, hence,
\[C_0^n(n)=a_kn+ \beta \rho_k \sum_{i=1}^{R_k-1} [a_k(n+i)+W\]
\[+ \rho_k \beta \sum_{j=0}^{R_k-1} V^n(n-R_k+i+j)]+\beta \rho_k C_0^n(n)\] 
That means,
$$C_0^n(n)=\frac{1}{1-\rho_k \beta}[a_kn+ \beta \rho_k \sum_{i=1}^{R_k-1} [a_k(n+i)+W$$ $$+ \rho_k \beta \sum_{j=0}^{R_k-1} V^n(n-R_k+i+j)]]$$
$$C_0^n(n)=\frac{1}{1-\rho_k \beta}[an+ \beta \rho_k \sum_{i=1}^{R_k-1} [a_k(n+i)+W]$$ $$+ \rho_k^2 \beta^2 \sum_{i=1}^{R_k-1}\sum_{j=0}^{R_k-1} V^n(n-R_k+i+j)]$$

Then,
\begin{align*}
(C_1^n(n)-C_0^n(n)) (1-\rho_k \beta)=&  W - \beta \rho_k \sum_{i=1}^{R_k-1} [a_kR+W]\\
&- \rho_k \beta W - a_k \beta\\
(C_1^n(n)-C_0^n(n)) (1-\rho_k \beta)=&  W - \beta (R_k-1) a_k\\ 
&- \beta \rho_k (R_k-1)W - \rho_k \beta W - a_k \beta\\
(C_1^n(n)-C_0^n(n)) (1-\rho_k \beta)=&  W - \beta a_kR_k- \beta W\\
(C_1^n(n)-C_0^n(n)) (1-\rho_k \beta)=& W(1- \beta) - \beta a_kR_k 
\end{align*}
Hence,
$C_1^n(n)-C_0^n(n)= \frac{W(1- \beta) - \beta a_kR_k}{1-\rho_k \beta}$
\section{Proof of proposition \ref{prop:optimality_condition_of_n}}\label{app:optimality_condition_of_n}

As the function $TO(V^n)$ is submodular, then for all $q_i^k \leq n$, we have $C_0^n(q_i^k)-C_1^n(q_i^k) \leq  C_0^n(n)-C_1^n(n)=0$, and  for all $q_i^k>n$, we have $C_0^n(q_i^k)-C_1^n(q_i^k) \geq  C_0^n(n)-C_1^n(n)=0$. 
That means $n$ is indeed an optimal threshold. 
\section{Proof of proposition \ref{prop:threshold_condition}}\label{app:threshold_condition}
1)$n \leq R_k-1$:\\
According to proposition \ref{prop:expression_of_g(n,w)} for $n \leq R_k-1$, $C_1^n(n)-C_0^n(n)=g_k(n,W)=W(1-n \beta \rho_k)-a_k \beta n$.
Knowing that $g_k(n,W)=0 \Leftrightarrow W=\frac{\beta a_k R_k n}{R_k- \beta n}$.    
That means, for $W=\frac{\beta a_k R_k n}{R_k- \beta n}$, $C_1^n(n)-C_0^n(n)=0$
Hence, using proposition \ref{prop:optimality_condition_of_n}, $n$ is indeed an optimal threshold.\\
2)$n \geq R_k$:\\
According to proposition \ref{prop:expression_of_g(n,w)} for $n \geq R_k$, $C_1^n(n)-C_0^n(n)=g_k(n,W)=\frac{W(1-\beta)-a_k \beta R_k}{1-\rho_k \beta}$.
Knowing that $g_k(n,W)=0 \Leftrightarrow W=\frac{\beta a_k R_k}{1- \beta}$.
Hence, 
$C_1^n(n)=C_0^n(n) \Leftrightarrow W= \frac{a_kR_k \beta}{1-\beta}$.
Applying proposition \ref{prop:optimality_condition_of_n}, the threshold $n$ is indeed an optimal solution when $W = \frac{a_kR_k \beta}{1 - \beta}$. That is true for all $n \geq R_k$, which concludes the proof.
\section{Proof of proposition \ref{prop:finite_threshold_condition}}\label{app:finite_threshold_condition}
We consider that the optimal solution is an infinite threshold and we consider  $V^{\infty}$ is the value function under infinite threshold.
\begin{Lemma}\label{lem:ineq_infinite_val_funct}
For all $q_i^k$ $V^{\infty}(q_i^k+R_k)-V^{\infty}(q_i^k) \geq \frac{a_kR_k}{1-\beta}$.
\end{Lemma}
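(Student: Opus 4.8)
The plan is to exploit the fact that an infinite threshold prescribes the passive action in \emph{every} state, so that under the never-transmit policy the queue never loses a packet and its dynamics collapse to $q_i^k(t+1)=q_i^k(t)+A_i^k(t)$ with the truncation $(\cdot)^+$ never active (since $q_i^k(t)\geq 0$ and $s_i^k(t)=0$). Consequently $V^{\infty}$ obeys the passive-action decomposition at \emph{all} states,
\begin{equation}
V^{\infty}(q)=a_k q+\beta\rho_k\sum_{i=0}^{R_k-1}V^{\infty}(q+i),
\end{equation}
mirroring the decomposition already used in the proof of Lemma~\ref{lem:velue_function_difference}, but now valid everywhere because no state is scheduled and the subsidy term $Ws_i^k$ vanishes identically.

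First I would set $\Delta(q):=V^{\infty}(q+R_k)-V^{\infty}(q)$ and subtract the recursion at $q+R_k$ from the one at $q$. The holding-cost increment $a_kR_k$ survives, and the future terms reorganize into a copy of $\Delta$ shifted over the $R_k$ possible arrivals, yielding the self-referential identity
\begin{equation}
\Delta(q)=a_kR_k+\beta\rho_k\sum_{i=0}^{R_k-1}\Delta(q+i).
\end{equation}
Letting $m:=\inf_q\Delta(q)$ and bounding each of the $R_k$ summands below by $m$ (recall $\rho_k=1/R_k$) gives $\Delta(q)\geq a_kR_k+\beta m$ for every $q$, hence $m\geq a_kR_k+\beta m$, i.e. $m\geq \frac{a_kR_k}{1-\beta}$, which is exactly the claim.

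The one gap in this argument is the legitimacy of the infimum manipulation: one must know $m>-\infty$ so that $\beta m$ is meaningful and $m\geq a_kR_k+\beta m$ can be rearranged. I expect this to be the main technical point, and I would settle it by a coupling argument that in fact delivers the result outright. Running two never-transmit trajectories driven by the \emph{same} arrival sequence $\{A_i^k(t)\}$ from initial states $q$ and $q+R_k$, the two queue lengths differ by exactly $R_k$ at every slot (again because no truncation ever occurs), so the per-slot discounted-cost gap is the deterministic constant $a_kR_k$ and
\begin{equation}
V^{\infty}(q+R_k)-V^{\infty}(q)=a_kR_k\sum_{t=0}^{\infty}\beta^{t}=\frac{a_kR_k}{1-\beta},
\end{equation}
establishing the inequality with equality and simultaneously certifying that $\Delta$ is the finite constant required above. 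This value is moreover tight for the intended use: multiplied by the factor $\beta\rho_k R_k=\beta$ arising from one passive transition, it produces the threshold $\frac{\beta a_kR_k}{1-\beta}$ of Proposition~\ref{prop:finite_threshold_condition}, so the bound will force the promised contradiction whenever $W<\frac{a_kR_k\beta}{1-\beta}$.
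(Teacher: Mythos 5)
Your proof is correct, but it takes a genuinely different route from the paper's. The paper starts from the same passive-action recursion and derives the same identity for $\Delta(q):=V^{\infty}(q+R_k)-V^{\infty}(q)$, namely $\Delta(q)=a_kR_k+\beta\rho_k\sum_{i=0}^{R_k-1}\Delta(q+i)$ (its summation indices contain a typo), but instead of taking an infimum it invokes the $R_k$-convexity of $V^{\infty}$ --- a structural property carried over from the threshold-policy analysis --- which makes $\Delta$ non-decreasing, so each summand is bounded below by $\Delta(q)$ itself, yielding the self-bound $\Delta(q)\geq a_kR_k+\beta\Delta(q)$ and hence the claim. Your infimum variant of that step indeed has the gap you flag (one must rule out $m=-\infty$ before rearranging), but your coupling argument supersedes it entirely: under the never-transmit policy the dynamics are $q(t+1)=q(t)+A_i^k(t)$ with the truncation never active, so two trajectories driven by the same arrival sequence stay exactly $R_k$ apart forever; both values are finite (the queue grows only linearly while $\beta<1$), so the difference of expectations is legitimate and $V^{\infty}(q+R_k)-V^{\infty}(q)=\frac{a_kR_k}{1-\beta}$ holds with equality. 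This is stronger than the paper's inequality, more elementary (it needs neither the $R_k$-convexity of $V^{\infty}$ nor any fixed-point manipulation), and, as you observe, it feeds directly into the contradiction in Proposition~\ref{prop:finite_threshold_condition}. What the paper's argument buys in exchange is robustness: it uses only the recursion plus monotonicity of increments, so it would survive in variants of the model where exact pathwise coupling breaks down (for instance, if the passive action could still serve or truncate the queue); in the present setting, however, your exact computation is the cleaner and sharper proof.
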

\begin{proof}
Under infinite threshold, since the decision taken for all states is passive action, then we have for all $q_i^k \geq 0$,
\[V^{\infty}(q_i^k+R_k)= a_k(q_i^k+R_k) -W + \beta \rho_k \sum_{i=1}^{R_k-1} V^{\infty}(q_i^k+R_k+i)\]
and
$V^{\infty}(q_i^k)= a_kq_i^k -W + \beta \rho_k \sum_{i=1}^{R_k-1} V^{\infty}(q_i^k+i)$
Hence,
\[V^{\infty}(q_i^k+R_k)-V^{\infty}(q_i^k)= a_kR_k + \beta \rho_k \sum_{i=1}^{R_k-1} [V^{\infty}(q_i^k+R_k+i)-V^{\infty}(q_i^k+i)]\]
Due to the $R_k$-convexity of $V^{\infty}$, then $V^{\infty}(q_i^k+R_k+i)-V^{\infty}(q_i^k+i)>V^{\infty}(q_i^k+R_k)-V^{\infty}(q_i^k)$
hence, $V^{\infty}(q_i^k+R_k)-V^{\infty}(q_i^k) \geq a_kR_k + \beta  [V^{\infty}(q_i^k+R_k)-V^{\infty}(q_i^k)]$,
that implies  $V^{\infty}(q_i^k+R_k)-V^{\infty}(q_i^k) \geq \frac{a_kR_k}{1- \beta}$.
\end{proof}    
We have the difference between $C_1^{\infty}(q_i^k)$ and $C_0^{\infty}(q_i^k)$ for $q_i^k \geq R_k$,
\[C_1^{\infty}(q_i^k)-C_0^{\infty}(q_i^k)=W+ \beta \rho_k \sum_{i=0}^{R_k-1} [V^{\infty}(q_i^k+i-R_k)-V^{\infty}(q_i^k+i)]\]
Applying lemma \ref{lem:ineq_infinite_val_funct}
\[C_1^{\infty}(q_i^k)-C_0^{\infty}(q_i^k) \leq W+ \beta \rho_k \sum_{i=0}^{R_k-1} [-\frac{a_kR_k}{1- \beta}]\]
According to proposition 8's assumption, we have that $W < \frac{a_kR_k \beta}{1- \beta}$, therefore, 
\[C_1^{\infty}(q^k_i)-C_0^{\infty}(q^k_i) < \frac{a_kR_k \beta}{1- \beta} - \frac{a_kR_k \beta}{1- \beta} = 0\]
%\[C_1^{\infty}(q)-C_0^{\infty}(q) < 0\]
Thus, at state $q^k_i$, the optimal decision is to transmit which contradict with the fact that the threshold is infinite.
\section{Proof of proposition \ref{prop:w.i}}\label{app:w.i}
1) $0 \leq n \leq R_k-1$:\\
We fix a state $n$ less than $R_k-1$.
For  $W < \frac{a_kR_k n \beta}{R_k-\beta n}$, as $g_k$ is strictly increasing in $W$, then $g_k(n,W)$ is less strictly than $0$, as $g_k$ is decreasing in $n$, then for all $q_i^k \geq n$, $g(q_i^k,W) \leq g_k(n,W) < 0$,  that implies for all $q_i^k \geq n$, $q_i^k$ can not be threshold( otherwise there exists $q_i^k \geq n$ such that $g_k(q_i^k,W) \geq 0$). Moreover, since $W < \frac{a_kR_k n \beta}{R_k-\beta n} \leq \frac{aR \beta}{1-\beta}$, then according to proposition \ref{prop:finite_threshold_condition}, optimal threshold must be finite. Hence, surely the optimal threshold is less strictly than $n$. That means at state $n$ the optimal decision is active action, hence, $n \notin D(W)$.
Applying the proposition \ref{prop:threshold_condition}, for $W = \frac{a_kR_k n \beta}{R_k-\beta n}$, the threshold $n$ is an optimal solution. Then $n \in D(W)$ Hence, we conclude the result for the first case.\\
2) $n \geq R_k$:
The second case comes from propositions \ref{prop:threshold_condition} and \ref{prop:finite_threshold_condition}.
In fact we need to prove that for all $n \geq R_k$, $\frac{a_kR_k \beta}{1-\beta}= \min \{W, n \in D(W)\}$, in other word for all $W < \frac{a_kR_k \beta}{1-\beta}$, $n \notin D(W)$ ( the optimal decision is active), and that $n \in D(\frac{a_kR_k \beta}{1-\beta})$.
for  $W < \frac{a_kR_k \beta}{1-\beta}$, as $g_k$ is strictly increasing in $W$, then $g_k(n,W)$ is less strictly than $0$ for all $n \geq R_k$, that implies for all $n \geq R_k$, $n$ can not be threshold. Moreover according to proposition \ref{prop:finite_threshold_condition}, threshold must be finite, hence, surely the optimal threshold is less strictly than $R_k$, that means at state $n \geq R_k$ the optimal decision is active action, hence, $n \notin D(W)$.
And applying the proposition \ref{prop:threshold_condition}, for  $W = \frac{a_kR_k \beta}{1-\beta}$, the threshold $n$ is an optimal solution, then $n \in D(W)$. Hence, we conclude the result.

\section{Proof of Theorem \ref{prop:new_policy}}\label{app:new_policy}
In order to prove this result, we need to prove that the order from the biggest Whittle index to the smallest one is the same.\\
First of all, we give the order of the Whittle indices given by propositions \ref{prop:w.i}, when $\beta > 1-\frac{\min\{a_j R_j\}}{\max\{a_j R_j^2\}}$.\\
Due to the indexability of all classes, it's obvious that the Whittle index is increasing in $n$ for given class $k$.
Moreover, considering any two classes $k$ and $m$, then, the Whittle index of any state $n_k$ in class $k$ and  greater than $R_k$ is larger than the Whittle index of any state $n_m$ in class $m$ less than $R_m$.
In fact we have $\beta > 1-\frac{\min\{a_j R_j\}}{\max\{a_j R_j^2\}}$, that means $1-\beta < \frac{\min\{a_j R_j\}}{\max\{a_j R_j^2\}}$, hence, $\frac{1}{1-\beta} > \frac{\max\{a_j R_j^2\}}{\min\{a_j R_j\}}$, hence,  $W_k(n_k)=\frac{a_k R_k}{1-\beta} > \frac{a_k R_k \max\{a_j R_j^2\}}{\min\{a_j R_j\}}\geq \max\{a_j R_j^2\} \geq a_m R_m^2 \geq a_m R_m (R_m-1)\geq \beta a_m R_m (R_m-1) \geq \frac{\beta a_m R_m (R_m-1)}{R_m- \beta (R_m-1)}$(because $R_m- \beta (R_m-1) \geq 1$) $=W_m(R_m-1) \geq W_m(n_m)$.\\
For the new form of Whittle index where we get rid of $\beta$ for states greater than maximum transmission rate, the order is not affected for the states less than $R_k-1$, since the Whittle indices are the same.
For the states greater than $R_k$, the Whittle index  $a_k R_k \max \{a_j R_j^2\}$ is higher than $a_k R_k a_m R_m (R_m-1) \geq a_m R_m (R_m-1) \geq \frac{\beta a_m R_m (R_m-1)}{R_m- \beta (R_m-1)}=W_m(R_m-1)$
Furthermore, the order between the Whittle indices greater than the transmission rate doesn't change since we just multiply by a constant which is $\frac{1-\beta}{\max\{a_j R_j^2\}}$ to go from $\frac{a_k R_k}{1-\beta}$ to  $a_k R_k \max \{a_j R_j^2\}$.
Hence, the new form of Whittle doesn't affect the order of Whittle index. That means, WI gives us the same Whittle index policy.

\end{appendices}
\end{document}